\newcommand{\RR}{\mathbb{R}}
\newtheorem{definition}{Definition}
\newtheorem{theorem}{Theorem}
\newtheorem{lemma}{Lemma}
\newtheorem{example}{Example}
\newtheorem{proposition}{Proposition}
\DeclareMathOperator*{\argmax}{arg\,max}
\DeclareMathOperator*{\argmin}{arg\,min}
\def\longversion{1} 
\tikzstyle{process} = [rectangle, minimum width=3cm, minimum height=1cm, text centered, text width=3cm, draw=black]
\tikzstyle{decision} = [diamond, minimum width=3cm, minimum height=1cm, text badly centered, draw=black]
\tikzstyle{cloud} = [draw, ellipse, node distance=3cm, minimum height=2em]
\tikzstyle{startstop} = [rectangle, rounded corners, minimum width=1cm, minimum height=1cm,text centered, draw=black]
\tikzstyle{arrow} = [thick,->,>=stealth]
\tikzstyle{block} = [draw, rectangle, 
\tikzstyle{sum} = [draw, fill=blue!20, circle, node distance=1cm]
\tikzstyle{input} = [coordinate]
\tikzstyle{output} = [coordinate]
\tikzstyle{pinstyle} = [pin edge={to-,thin,black}]
\title{\LARGE \bf
Computing Complexity-aware Plans Using Kolmogorov Complexity
}
\author{Elis Stefansson$^{1}$ and Karl H. Johansson$^{1}$ 
\thanks{$^{1}$School of Electrical Engineering and Computer Science, KTH Royal Institute of Technology, Sweden. Email: \{elisst, kallej\}@kth.se}%
\thanks{This work was partially funded by the Swedish Foundation for Strategic Research, the Swedish Research Council, and the Knut och Alice Wallenberg foundation.}
}
\begin{document}

\maketitle
\thispagestyle{empty}
\pagestyle{empty}

\begin{abstract}
In this paper, we introduce complexity-aware planning for finite-horizon deterministic finite automata with rewards as outputs, based on Kolmogorov complexity. Kolmogorov complexity is considered since it can detect computational regularities of deterministic optimal policies. We present a planning objective yielding an explicit trade-off between a policy's performance and complexity. It is proven that maximising this objective is non-trivial in the sense that dynamic programming is infeasible. We present two algorithms obtaining low-complexity policies, where the first algorithm obtains a low-complexity optimal policy, and the second algorithm finds a policy maximising performance while maintaining local (stage-wise) complexity constraints. We evaluate the algorithms on a simple navigation task for a mobile robot, where our algorithms yield low-complexity policies that concur with intuition. 
\end{abstract}

\section{Introduction}

\subsection{Motivation}
Artificial intelligence has under the last decade progressed significantly achieving superhuman performance in challenging domains such as the video game Atari and the board game Go \cite{mnih2015human,silver2016mastering}. Unfortunately, for more complex and unconstrained environments (e.g., advanced real-world systems such as autonomous vehicles) results are more limited \cite{recht2018tour}. One major challenge here is the huge space of all possible strategies (policies) that the agent (i.e., robot or machine) can perform, making tractable solutions cumbersome to obtain naively. However, humans tend to face these complex domains with relative ease.

One explanation why humans perform well in complex tasks comes from cognitive neuroscience, proposing that general intelligence is linked to efficient compression, known as the efficient coding hypothesis \cite{attneave1954some, barlow1961possible, sims2018efficient}. This idea is not new but can be traced back to William of Occam saying \emph{``If there are alternative explanations for a phenomenon, then, all other things being equal, we should select the simplest one''}, where the simple alternative is the alternative with the shortest explanation. This methodology is known as Occam's razor \cite{li2008introduction}. Formalisations of Occam's razor have been able to detect computational regularities in for example the decimals of $\pi$ \cite{zenil2018decomposition} and automatically extract low-complexity physical laws (e.g., $E = {mv^2}/{2}$) directly from data \cite{udrescu2020ai}. In this paper, we are interested if a similar formalisation can automatically extract low-complexity policies, seen as a first step towards more tractable and intelligent behaviour for agents acting in complex environments.


\subsection{Contribution}
The main contribution of this paper is to define a complexity measure for policies in deterministic finite automata (DFA) \cite{Hopcroft2006book} using Kolmogorov complexity \cite{li2008introduction}, and to construct tractable complexity-aware planning algorithms with explicit trade-offs between performance and complexity based on this measure.
More precisely, our contributions are~three-fold:

Firstly, we define a complexity measure for deterministic policies in finite-horizon DFA with rewards as outputs. This measure uses Kolmogorov complexity to evaluate how complex a policy is to execute. Kolmogorov complexity, which can be seen as a formalisation of Occam's razor, is a computational notion of complexity being able to not only detect statistical regularities (commonly exploited in standard information theory) but also computational regularities.\footnote{An example of a sequence with computational regularity but no apparent statistical regularity is the infinite sequence $1234567891011 \dots$.} Our key insight is that optimal policies typically posses computational regularities (such as reaching a goal state) apart from statistical ones, making Kolmogorov complexity an appealing complexity evaluator.

Secondly, we present a complexity-aware planning objective based on our complexity measure, yielding an explicit trade-off between a policy's performance and complexity. We also prove that maximising this objective is non-trivial in the sense that dynamic programming \cite{Bellman:1957} is infeasible. 

Thirdly, we present two algorithms obtaining low-complexity policies. The first algorithm, Complexity-guided Optimal Policy Search (COPS), finds a policy with low complexity among all optimal policies, following a two-step procedure. In the first step, the algorithm finds all optimal policies, without any complexity constraints. Dropping the complexity constraints, this step can be done using dynamic programming. The second step runs a uniform-cost search \cite{aibook} over all optimal policies, guided by a complexity-driven cost, favouring low-complexity policies and enabling moderate search depths. The second algorithm, Stage-complexity-aware Program (SCAP), penalises instead policies locally for executing complex manoeuvres. This is done by partitioning the horizon into stages, with local complexity constraints over the stages instead of the full horizon, which enables dynamic programming over the stages. Finally, we evaluate our algorithms on a simple navigation task where a mobile robot tries to reach a certain goal state. Our algorithms yield low-complexity policies that concur with intuition.

\subsection{Related Work}
The interest of complexity in control and learning has a long history going back to Bellmann's  curse of dimensionality \cite{Bellman:1957}, Witsenhausen's counterexample \cite{witsenhausen1968counterexample, ho1972team} and the general open problem under what conditions LQG admits an optimal low-dimensional feedback controller \cite{bb5ed01f-5468-46ac-b45c-9d8e1b6fb224}, just to mention a few, and has lately been reviewed as an essential component for intelligent behaviour \cite{russell2016rationality}. Recent attempts to find low-complexity policies can be divided into two categories. In the first category, the system itself is approximated by a low-complexity system (e.g., smaller dimension), whereas an approximately optimal solution can be obtained. Methods in this category include bisimulation \cite{girard2011approximate,biza2020learning}, PCA analysis \cite{roy2003exponential,liu2017exploiting}, and information-theoretic compression such as the information bottleneck method \cite{abel2019state,larsson2020information}. In the second category, a low-complexity policy is instead obtained directly. Here, notable methods include policy distillation \cite{rusu2015policy}, VC-dimension constraints \cite{kearns2000approximate}, concise finite-state machine plans \cite{concise2017, pervanAlgorithmic2021}, low-memory policies through sparsity constraints \cite{pmlr_v144_booker21a}, and information-theoretic approaches such as KL-regularisation \cite{rubin2012trading,tishby2011information}, mutual information regularisation with variations \cite{tanaka2017lqg,fox2016minimum,tanaka2017transfer}, and minimal specification complexity \cite{1178902,4476333}. Our work belongs to this second category and resembles \cite{concise2017, pervanAlgorithmic2021,1178902,4476333} the most, but differ since we consider Kolmogorov~complexity.

Kolmogorov complexity has also been considered in the context of reinforcement learning as a tool for complexity-constrained inference \cite{cohen2019strongly,aslanides2017universal,hutter2004universal} based on Solomonoff's theory of inductive inference \cite{li2008introduction}. We differ by focusing instead on constraining the computational complexity of the obtained policy itself, assuming the underlying system to be~known. 

Finally, the work \cite{chmait2016factors} considers Kolmogorov complexity to measure the complexity of an action sequence similar to this line of work. We differ by also optimising over the complexity, while \cite{chmait2016factors} only evaluates the complexity of an immutable~object. 

\subsection{Outline}
The remaining paper is as follows. Section~\ref{problem_formulation} provides  preliminaries together with the problem statement. Section~\ref{planning} defines the complexity measure together with the complexity-aware planning objective, and proves that dynamic programming is infeasible. Section~\ref{algorithms} presents two algorithms for yielding low-complexity policies and Section~\ref{numerical} evaluates the algorithms on a mobile robot example. Finally, Section~\ref{conclusion} concludes the~paper.

\if\longversion0
An extended version of this paper can be found at \cite{stefansson2021cdc}.
\fi

\section{Problem Formulation}\label{problem_formulation}

\subsection{Turing Machines}
We give a brief review of Turing machines following \cite{li2008introduction}.
A Turing machine is a mathematical model of a computing device, manipulating symbols on an infinite list of cells (known as the tape), reading one symbol at a time on the tape with a one access pointer (known as the head). Formally:

\begin{definition}
A Turing machine $M$ is a tuple $(A,Q,\delta,q_0)$ with: A finite set of tape symbols $A \cup \{b \}$ with alphabet $A$ and blank symbol $b$; A finite set of states $Q$ with start state $q_0 \in Q$; A partial function\footnote{We use the notation $f: A \nrightarrow B$ to denote a partial function from a set $A$ to a set $B$ (i.e., a function that is only defined on a subset of $A$).} $\delta : Q \times (A \cup \{b\}) \nrightarrow (A \cup \{b\} \cup \{L,R \}) \times Q$ called the transition function.
\end{definition}
An execution of a Turing machine $M$ with input string\footnote{Here, $A^*$ denotes the set of all finite strings from the alphabet $A$, e.g., $x=010 \in A^*$ if $A=\{0,1\}$.} ${x \in A^*}$ starts with $x$ on the tape (and blank symbols on both sides of $x$), $q=q_0$ as state, and head on the first symbol $s$ of $x$. It then transitions according to $(h,p) = \delta(q,s)$, where $h \in A \cup \{b\}$ specifies what current scanned symbol $s$ should be replaced with, or if head should move left (right) one step ($h \in \{L,R \}$), and $p \in Q$ specifies the next state. This transition procedure is repeated until $\delta(q,s)$ becomes undefined. In this case, $M$ halts with output $y \in A^*$ equal to the maximal string in $A^*$ currently under scan, or $y=0$ if $b$ is scanned. For each Turing machine $M$, this input-output convention defines a partial function $\phi$ (defined when $M$ halts) known as a \emph{partial computable function}.

An important special class of Turing machines is the universal Turing machines. Informally, a universal Turing machine is a Turing machine that can simulate any other Turing machine, and can therefore be seen as an idealised version of a modern computer. A fundamental result states that there exist such machines \cite{li2008introduction}, a fact utilised when defining the Kolmogorov complexity.

\subsection{Kolmogorov Complexity}\label{Kolmogorov_complexity}
In this section, we present needed theory concerning Kolmogorov complexity  \cite{li2008introduction}. Informally, the Kolmogorov complexity of an object $x$ is the length of the smallest program $p$ on a computer that outputs $x$. If $p$ is much smaller than $x$, then we have compressed the information in $x$ to only the essential information of $x$.
The Kolmogorov complexity can therefore be seen as a formalisation of Occam's razor, stripping away all the non-essential information. More formally, the computer is a universal Turing machine $U$, and the object $x \in A^*$ and the program $p \in A^*$ are strings of a finite alphabet $A$. Towards a precise definition, we need the following notion:

\begin{definition}[\cite{li2008introduction}]
The complexity of a string $x \in A^*$ with respect to a partial computable function $f: A^* \nrightarrow A^*$ is defined as
\begin{equation*}
K_f(x) = \min \{\ell(p): f(p) = x \},
\end{equation*}
where $\ell(p)$ is the length of string $p$. Here, $p$ is interpreted as the program input to $f$, and $K_f(x)$ is thus the length of the smallest program that, via $f$, describes $x$. 
\end{definition}
The following fundamental result, known as the invariance theorem, asserts that there is a partial computable function with shorter (i.e., more compact) descriptions than any other partial computable function, up to a constant:
\begin{theorem}[Invariance theorem \cite{li2008introduction}]\label{invariance_theorem}
There exists a partial computable function $\phi$, constructed from a universal Turing machine $U$, with the following \emph{additively optimal} property: For any other partial computable function $\psi$, there exists a constant $c_\psi$ (dependent only on $\psi$) such that for all $x \in A^*$: $K_\phi(x) \leq K_\psi(x)+c_{\psi}$.
\end{theorem}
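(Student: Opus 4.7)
The plan is to construct $\phi$ directly from a universal Turing machine and then show that describing ``which machine to simulate'' only costs a constant number of extra tape symbols, independent of the input $x$. First I would fix a computable enumeration $\psi_1,\psi_2,\dots$ of all partial computable functions; such an enumeration exists because every Turing machine is specified by a finite tuple $(A,Q,\delta,q_0)$ and so can itself be encoded as a string in $A^*$ and listed lexicographically. Next I would fix a self-delimiting (prefix-free) encoding $i \mapsto \bar{i} \in A^*$ of the positive integers, so that a machine scanning $A^*$ from left to right can unambiguously detect where $\bar{i}$ ends. A concrete choice (e.g., doubling each bit of $i$ and appending a separator) gives $\ell(\bar{i}) \le 2\log_{|A|}(i+1)+O(1)$, but for the theorem only the existence of such an encoding, and the fact that $\ell(\bar{i})$ depends only on $i$, matters.

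Next I would define a universal Turing machine $U$ whose behaviour on input $w \in A^*$ is: parse $w$ uniquely as $w=\bar{i}\,p$ using the prefix code; then simulate $T_i$ on $p$ and copy the result to the output tape when and if $T_i$ halts. The existence of such a $U$ is the standard construction of a self-interpreter, using the finite description of $T_i$ read off from $\bar{i}$ to drive a generic step-by-step simulation on the remaining tape contents $p$. Let $\phi$ be the partial computable function that $U$ realises under the input/output convention already established in the excerpt.

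Finally I would verify the additively optimal property. Fix any partial computable function $\psi$, and pick an index $i$ with $\psi=\psi_i$; set $c_\psi := \ell(\bar{i})$, which depends only on $\psi$ (via the chosen enumeration and encoding). For any $x \in A^*$, if there is no $p$ with $\psi(p)=x$ then $K_\psi(x)=\infty$ and the inequality is vacuous; otherwise let $p^*$ be a minimal such program, so $\ell(p^*)=K_\psi(x)$. By construction $\phi(\bar{i}\,p^*)=\psi_i(p^*)=x$, hence
\begin{equation*}
K_\phi(x) \;\le\; \ell(\bar{i}\,p^*) \;=\; \ell(\bar{i}) + \ell(p^*) \;=\; K_\psi(x)+c_\psi,
\end{equation*}
which is the desired bound.

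The main obstacle is the prefix-free encoding step: without self-delimiting $\bar{i}$, the machine $U$ cannot tell where the ``program selector'' ends and the program $p$ begins, so $\phi$ would not be well defined and the additive constant could leak into an $O(\log\ell(p))$ term. Once the prefix code is in place the universal simulator is a routine construction and the bound drops out immediately, so nearly all of the work is concentrated in choosing the encoding correctly.
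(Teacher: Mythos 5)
Your proof is correct and is the standard argument for the invariance theorem (a universal machine that parses a self-delimiting index $\bar{i}$ followed by the program $p$, with $c_\psi = \ell(\bar{i})$), exactly as in the cited reference \cite{li2008introduction}; the paper itself states this theorem as a known result without reproducing a proof, so there is no in-paper argument to contrast with. Your attention to the prefix-free encoding and to the vacuous case $K_\psi(x)=\infty$ is exactly where the care is needed, and both are handled properly.
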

This function $\phi$ serves as our computer when defining the Kolmogorov complexity:
\begin{definition}[Kolmogorov Complexity  \cite{li2008introduction}]\label{definition_kolmogorov_complexity}
Let $\phi$ be as in Theorem \ref{invariance_theorem}. The Kolmogorov complexity of $x \in A^*$ is $K_\phi(x)$. We sometimes abbreviate $K_\phi$ as $K$. 
\end{definition}
The Kolmogorov complexity is robust in the sense that it depends only benignly on the choice of $\phi$, see \cite{li2008introduction} for details.
The Kolmogorov complexity $K(x)$ is however not computable in general, but can only be over-approximated pointwise \cite{li2008introduction}. The method we use in this paper to approximate $K(x)$ is from \cite{zenil2018decomposition} and based on algorithmic probability \cite{li2008introduction}, see\if\longversion0 \cite{stefansson2021cdc}
\else
\space Appendix
\fi
for a summary.
This estimation method is used in the simulations (Section \ref{numerical}), while the underlying theoretical framework we develop (Section \ref{planning} and \ref{algorithms}) is based on the exact Kolmogorov complexity.\footnote{We stress that other estimation methods can also be used, e.g., Lempel-Ziv compression \cite{lempel1976complexity} (e.g., used in \cite{chmait2016factors}), since the theoretical framework is independent on the particular estimation method chosen. We picked \cite{zenil2018decomposition} due to its more direct connection with Kolmogorov complexity, whereas Lempel-Ziv compression relies on classical information theory.}

\subsection{Deterministic Finite Automata}
We consider finite-horizon planning for discrete systems formalised as time-varying DFA \cite{Hopcroft2006book} with actions as inputs and rewards as outputs (i.e., time-varying Mealy machines \cite{Mealy1955}), on the form:\footnote{The time-varying feature of the DFA can be lifted by including the time into the state. We keep the current notation for easier readability.}
\begin{equation}\label{MDP_eq_tuple}
\mathcal{M} = \langle S,A,T, (f_t, r_t, F_t, t \in \mathbb{T} ) \rangle,
\end{equation}
where $S$ is the finite set of states, $A$ the finite action set, $T \in \mathbb{N}$ the horizon length and $\mathbb{T} = \{0,1,\dots,T\}$, $f_t: S \times A \rightarrow S$ the transition function specifying the next state $f_t(s,a) \in S$ at time $t$ given current state $s \in S$ and action $a \in A$, and $r_t: S \times A \rightarrow \RR$ the corresponding received reward $r_t(s,a) \in \RR$. The system stops at $t=T$, given by final state sets $F_T = S$ and $F_t=\emptyset$ for $t<T$. Given a start state $s_0 \in S$, the objective is to find a (deterministic) policy $\pi: \mathbb{T}\times S \rightarrow A$ maximising the total reward $\sum_{t=0}^T r_t(s_t,a_t)$ subject to the transition dynamics $s_{t+1} = f_t(s_t,\pi(t,s_t))$. A policy which does this for every start state $s_0 \in S$ is called an \emph{optimal}~policy.

\subsection{Problem Statement}
We now formalise the problem statement. In this paper, we answer the following questions: 
\begin{enumerate}
\item Given an $\mathcal{M}$ as in \eqref{MDP_eq_tuple}, how can one define a complexity measure $C(s_0,\pi)$ capturing how complex a policy $\pi$ is to execute from a start state $s_0 \in S$?
\item How can one construct a planning objective $\Psi$ with a formal trade-off between total reward $\sum_{t=0}^T r_t(s_t,a_t)$ and complexity $C(s_0,\pi)$ for a policy $\pi$?
\item How can one construct algorithms that maximises $\Psi$?
\item In particular, can dynamic programming be used to maximise~$\Psi$? 
\end{enumerate}
Questions 1, 2 and 4 are answered in Section \ref{planning}, while Section~\ref{algorithms} answers question 3 together with the numerical evaluations in Section \ref{numerical}.

\section{Complexity-aware Planning}\label{planning}
This section introduces a complexity measure in Section \ref{execution_complexity} and then sets up an appropriate complexity-aware planning objective in Section \ref{complexity_planning_obj}. Finally, we prove that dynamic programming cannot be used to maximise this objective in Section \ref{infeasible_dp}. Throughout this section, we fix a system $\mathcal{M}$ as in \eqref{MDP_eq_tuple}.

\subsection{Execution Complexity}\label{execution_complexity}
We start by defining our complexity measure for policies, capturing how complex it is to execute a policy $\pi$ from a start state $s_0 \in S$. Towards this, note that, given a start state $s_0 \in S$ and a policy $\pi$, we get a sequence of actions $(\pi(0,s_0), \dots, \pi(T,s_T))$ in $A^{T+1}$ from time $t=0$ to time $t=T$, generated by $\pi$ and the dynamics $s_{t+1} = f_t(s_t,\pi(t,s_t))$. We denote this action sequence by $E(s_0,\pi)$ and say that $\pi$ has low \emph{execution complexity} if $K(E(s_0,\pi))$ is low. Intuitively, a policy $\pi$ with low execution complexity has a small program that can execute it.

\begin{definition}[Execution Complexity]\label{execution_complexity_def}
Given a start state $s_0 \in S$, the execution complexity of a policy $\pi$ is ${C(s_0,\pi) = K(E(s_0,\pi))}$.
\end{definition}

\subsection{Complexity-aware Planning Objective}\label{complexity_planning_obj}
The execution complexity can be used to find policies with high total reward while keeping a low complexity. Formally, we want to find an action sequence $a_{0:T} =(a_0,a_1,\dots,a_T)$ maximising the objective\footnote{For convenience, we maximise directly over control inputs $a_t = \pi(t,s_t)$ instead of policies, since the horizon $T$ is typically lower than the number of states in $S$. With this convention, $K( a_{0:T})$ in \eqref{eq:alg_ec} agrees with $C(s_0,\pi)$ in Definition \ref{execution_complexity_def}.}
\begin{align}\label{eq:alg_ec}
\max_{a_{0:T}} \; \left [ \sum_{t=0}^T r_t(s_t,a_t)-\beta K( a_{0:T}) \right ],
\end{align}
subject to the dynamics $s_{t+1} = f_t(s_t,a_t)$ and start state $s_0 \in S$. Here, $\beta \geq 0$ determines how much we penalise complexity relative to obtaining a high total reward.

\begin{example}[Simple optimal policies]\label{ex_simple}
For $\beta>0$ sufficiently low, we obtain, for a start state $s_0 \in S$, an interesting subset of all optimal policies (i.e., policies maximising the total reward $\sum_{t=0}^T r_t(s_t,a_t)$) with the lowest complexity, which we here call \emph{simple optimal policies}. Note that, the simple optimal policies can also be found via the objective
\begin{equation}\label{eq:simple_optimal_policies}
\min_{a_{0:T}} K( a_{0:T}) \quad s.t. \; \; a_{0:T} \in \argmax_{a_{0:T}} \sum_{t=0}^T r_t(s_t,a_t),
\end{equation}
subject to the dynamics $s_{t+1} = f_t(s_t,a_t)$ and start state $s_0$. See\if\longversion0 \cite{stefansson2021cdc}
\else
\space Appendix
\fi
for a proof.
\end{example}

\subsection{Dynamic Programming is Infeasible}\label{infeasible_dp}

We are interested in algoritms obtaining the maximum of~\eqref{eq:alg_ec}. A standard method for finding an optimal action sequence maximising the total reward $\sum_{t=0}^T r_t(s_t,a_t)$ is dynamic programming \cite{Bellman:1957}. Hence, one may ask if \eqref{eq:alg_ec} can be solved using dynamic programming. The answer is negative and follows from the following result. 

\begin{proposition}\label{co:no_dyn_prog}
For sufficiently large $T \in \mathbb{N}$, we cannot decompose $K$ on the form
\begin{equation*}
K(x_{1:T}) = h_1(x_1)+g_{2:T}(x_{2:T})
\end{equation*}
where $ \{ g_{k:T} \}_{k=2}^{T}$ are functions given by
\begin{equation*}
g_{k:T}(x_{k:T}) = h_k(x_k)+g_{k+1:T}(x_{k+1:T})
\end{equation*}
for some functions $\{ h_i \}_{i=1}^T$. That is, we cannot decompose $K$ recursively into an immediate complexity plus a future complexity. In particular, we cannot obtain $K(x_{1:T})$ by backward induction.\footnote{Note that $h_i$ and $g_{i:T}$ can by \emph{any} functions, not necessarily computable. If one restricts these functions to be computable, then the result follows easily from the fact that $K$ itself is not commutable.}
\end{proposition}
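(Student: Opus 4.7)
The plan is to show that the hypothesised recursive decomposition forces $K$ to be additively separable across its coordinates, and then to contradict this separability with a counting argument against the universal lower bound on Kolmogorov complexity. First, I would unroll the telescoped recursion for $k = 2, \ldots, T$, using the natural base convention $g_{T+1:T}(\varepsilon) = c_0$, to obtain $K(x_{1:T}) = \sum_{i=1}^T h_i(x_i) + c_0$ for some functions $h_i : A \to \mathbb{R}$ and a constant $c_0$. Thus the candidate decomposition is equivalent to additive separability of $K$ over the positions.

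Second, I would exploit constant sequences to bound the $h_i$ sums from above. For any $a \in A$, the string $a^T$ has complexity $K(a^T) \leq \log_2 T + c_a$, since it suffices to describe the single symbol $a$ and then the length $T$ in binary. Substituting $x_{1:T} = a^T$ into the separable form gives $\sum_{i=1}^T h_i(a) \leq \log_2 T + O(1)$ for every $a \in A$, where the $O(1)$ absorbs $c_a - c_0$ uniformly in $a$ (this is where finiteness of $A$ is used).

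Third, I would apply the standard counting fact that at most $2^{k+1}-1$ strings satisfy $K(x) \leq k$ to conclude that at least half of the $|A|^T$ strings $x \in A^T$ satisfy $K(x) \geq T \log_2 |A| - 1$. This yields
\begin{equation*}
\sum_{x \in A^T} K(x) \;\geq\; \tfrac{1}{2}\,|A|^T\,(T \log_2 |A| - 1),
\end{equation*}
while evaluating the same sum through the additive form gives
\begin{equation*}
\sum_{x \in A^T} K(x) \;=\; |A|^{T-1} \sum_{a \in A} \sum_{i=1}^T h_i(a) \;+\; c_0 |A|^T \;\leq\; |A|^T\,(\log_2 T + O(1)),
\end{equation*}
using the previous step. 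Comparing the two estimates would force $T \log_2 |A| \leq 2 \log_2 T + O(1)$, which fails for $T$ sufficiently large whenever $|A| \geq 2$, producing the desired contradiction.

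The main obstacle I expect is the bookkeeping around the base case of the recursion so that the unrolling produces a genuinely additive form, together with the careful tracking of the additive constants (in particular that the description costs $c_a$ are uniformly bounded over the finite alphabet). Beyond these technicalities, the intuition is clean: any position-wise separable function grossly under-counts the complexity of generic random sequences while simultaneously being held down by the small complexity of constant sequences, so the two constraints become incompatible once the horizon is long enough.
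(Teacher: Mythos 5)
Your proof is correct, but it reaches the contradiction by a genuinely different route than the paper. Both arguments share the same two ingredients --- the counting lower bound showing that most strings in $A^T$ are incompressible, and the $\log_2 T + O(1)$ upper bound on $K(a^T)$ for constant strings --- but they combine them differently. The paper first reduces to $A=\{0,1\}$, picks a single incompressible string $x^*_{1:T}$, and pairs it with its bitwise complement $\tilde{x}_{1:T}$; the additive form then gives the exact identity $K(x^*_{1:T})+K(\tilde{x}_{1:T})=\sum_i\bigl[h_i(0)+h_i(1)\bigr]=K(0^T)+K(1^T)$, and an invariance-theorem argument is needed to show the complement of an incompressible string is itself incompressible (up to an additive constant $c_{\hat M}$). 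You instead sum $K$ over \emph{all} of $A^T$ and evaluate that sum both ways: the additive form collapses it to $|A|^{T-1}\sum_{a\in A}\sum_i h_i(a)$, which the constant-string bound caps at $|A|^T(\log_2 T+O(1))$, while the counting bound forces it to grow like $|A|^T\cdot T$. This averaging step buys you two things: you never need the complement trick or the accompanying invariance-theorem machinery, and the argument works verbatim for any finite alphabet without reducing to the binary case. The only cosmetic caveat is a units mismatch: under the paper's convention programs are strings over $A$ itself, so the incompressibility threshold is $T-O(1)$ in $A$-ary symbols rather than $T\log_2|A|$ bits, and the constant-string bound is $\log_{|A|}T+O(1)$; either convention yields a linear-versus-logarithmic clash, so the contradiction for large $T$ is unaffected. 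Your handling of the base case of the recursion (absorbing $g_{T+1:T}$ into a constant $c_0$) and of the uniformity of the $c_a$ over the finite alphabet is also fine.
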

See\if\longversion0 \cite{stefansson2021cdc}
\else
\space Appendix
\fi
for proof. Due to Proposition \ref{co:no_dyn_prog}, we cannot naively apply dynamic programming to solve \eqref{eq:alg_ec} (for large enough $T$). More precisely, the objective function
\begin{equation*}
\Psi(a_{0:T}) := \sum_{t=0}^T r_t(s_t,a_t)-\beta K( a_{0:T}),
\end{equation*}
subject to the dynamics $s_{t+1} = f_t(s_t,a_t)$ and start state $s_0 \in S$, cannot be decomposed on the form $\Psi(a_{k:T}) = \Psi_0(a_{0})+\Phi_{1:T}(a_{1:T})$, where $ \{ \Phi_{k:T} \}_{k=1}^{T}$ are given recursively by $\Phi_{k:T}(x_{k:T}) = \Psi_k(x_k)+\Phi_{k+1:T}(x_{k+1:T})$ for some functions $\{ \Psi_k \}_{i=0}^T$. Indeed, if such a decomposition existed, it would contradict Proposition \ref{co:no_dyn_prog}. Thus, $\Psi(a_{0:T})$ cannot be maximised using dynamic programming. Fortunately, there are methods that circumvent this issue, presented next.


\section{Complexity-aware Planning Algorithms}\label{algorithms}
This section answers question 3 in the problem statement, presenting two algorithms that circumvent the dynamic programming issue given by Section \ref{infeasible_dp}. The first algorithm, COPS, restricts the task to find simple optimal policies, while the second algorithm, SCAP, modifies the objective focusing on local (stage-wise) complexity.
 
\subsection{Complexity-guided Optimal Policy Search (COPS)}\label{alg:search}
COPS seeks a simple optimal policy as in Example \ref{ex_simple} by a two-step procedure. Step 1 conducts ordinary dynamic programming (maximising the total reward $\sum_{t=0}^T r_t(s_t,a_t)$). This yields a mapping $\Pi: \mathbb{T} \times S \rightarrow \mathcal{P}(A)$ such that $\Pi(t,s) \subseteq A$ are the optimal actions at time $t  \in \mathbb{T}$ and state $s \in S$.\footnote{Here, $\mathcal{P}(A)$ is the power set of $A$, i.e., the set of all subsets of $A$.} In step 2, a uniform-cost search \cite{aibook} is executed to find an optimal action sequence $a_{0:T}$ with low execution complexity. A node $n$ in this search is on the form $n = (t,s_t, a_{0:t-1}) \in \mathbb{T} \times S \times A^{t}$,
where $t$ and $s_t$ are the current time and state and $a_{0:t-1}$ is the sequence of previous actions taken to arrive at state $s_t$ at time $t$. If node $n$ is not a terminal node (i.e., $t \leq T$), the children of $n$ are given by all $(t+1,s_{t+1}, a_{0:t})$ such that $a_t \in \Pi(t,s_t)$ and $s_{t+1} = f_t(s_t,a_t)$, i.e., we expand only over optimal actions. The cost for a node $n = (t,s_t, a_{0:t-1})$ is set to $c(n) = K(a_{0:t-1})$. 
The heuristic intuition behind the cost is that a low-complexity sequence $a_{0:T}$ should be more likely to have low-complexity subsequences $a_{0:t-1}$ and, thus, the cost focuses the search on low-complexity sequences, enabling moderate search depths.

The uniform-cost search is conducted by iteratively generating children of the node $n$ with the lowest cost, starting from the root node $n_0 = (0,s_0, \emptyset)$. We terminate the search when a terminal node $n = (T+1,s_{T+1}, a_{0:T})$ has the lowest cost and return its action sequence $a_{0:T}$. The algorithm is summarised by Algorithm \ref{Alg:SOPA}. The termination is motivated by the following result:

\begin{proposition}\label{th:search}
Assume $K(a_{0:t-1}) \leq K(a_{0:t})$ holds for all ${t  \leq T}$ and optimal action sequences $a_{0:T}$. Then Algoritm~\ref{Alg:SOPA} returns an optimal action sequence $a_{0:T}$ with lowest execution complexity, i.e., $a_{0:T}$ maximises~\eqref{eq:simple_optimal_policies}.
\end{proposition}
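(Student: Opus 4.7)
The plan is to reduce the claim to the textbook correctness guarantee for uniform-cost search (UCS), with the monotonicity hypothesis $K(a_{0:t-1}) \leq K(a_{0:t})$ playing exactly the role that the usual "non-negative, non-decreasing edge costs" assumption plays in the classical UCS proof. First, I would observe that Step~1 of COPS correctly computes the set $\Pi(t,s)$ of optimal actions at every $(t,s)$ by standard dynamic programming for finite-horizon DFA. Consequently, in Step~2 the search tree has as its set of root-to-terminal paths exactly the optimal action sequences $a_{0:T}$ starting from $s_0$: every expanded child is generated via some $a_t \in \Pi(t,s_t)$, so every terminal node $n = (T+1, s_{T+1}, a_{0:T})$ satisfies $a_{0:T} \in \argmax_{a_{0:T}} \sum_{t=0}^T r_t(s_t,a_t)$, and conversely every optimal action sequence appears as one such terminal.

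Next, I would invoke the UCS correctness principle: if node costs $c(n)$ are non-decreasing along every root-to-terminal path, then expanding the frontier node of lowest cost first and halting when the minimum-cost frontier node is terminal returns a terminal of globally minimum cost. In our setting, the cost attached to $n = (t, s_t, a_{0:t-1})$ is $c(n) = K(a_{0:t-1})$, and along any path in the tree the successor of $n$ is $n' = (t+1, s_{t+1}, a_{0:t})$ with $c(n') = K(a_{0:t})$. The hypothesis $K(a_{0:t-1}) \leq K(a_{0:t})$ for every optimal $a_{0:T}$ and every $t \leq T$ is exactly the monotonicity condition needed, so the standard UCS invariant applies verbatim.

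Putting the two pieces together, the terminal node returned by Algorithm~\ref{Alg:SOPA} achieves
\[
K(a_{0:T}) \;=\; \min \bigl\{ K(\tilde a_{0:T}) : \tilde a_{0:T} \text{ is an optimal action sequence from } s_0 \bigr\},
\]
which is precisely the feasible-set minimiser of \eqref{eq:simple_optimal_policies}. Since by Definition~\ref{execution_complexity_def} the execution complexity is $C(s_0,\pi) = K(E(s_0,\pi))$, this $a_{0:T}$ has the lowest execution complexity among all optimal action sequences, as claimed.

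The main obstacle I anticipate is formalising the UCS optimality argument cleanly in this somewhat nonstandard setting, where costs are attached to \emph{nodes} (prefixes) rather than edges and where the monotonicity is a hypothesis about $K$ rather than a structural property. I would handle this by explicitly stating the loop invariant: at every iteration, for every terminal $n^\star$ not yet generated there exists a frontier node $n$ on the path to $n^\star$ with $c(n) \leq c(n^\star)$; this is proved by induction on expansions using monotonicity, and yields the halting condition's correctness. Everything else — that Step~1 produces $\Pi$ correctly, that the tree's terminals coincide with optimal sequences, and that minimising $K(a_{0:T})$ over that set is exactly \eqref{eq:simple_optimal_policies} — is immediate once this invariant is in place.
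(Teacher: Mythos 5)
Your argument is correct and matches the paper's proof essentially verbatim: both reduce the claim to the standard uniform-cost-search optimality guarantee (the paper cites it as Lemma~\ref{lemma:search} from \cite{aibook}), using the hypothesis $K(a_{0:t-1}) \leq K(a_{0:t})$ as the required cost monotonicity and noting that the returned sequence is optimal by construction since Step~2 expands only actions in $\Pi(t,s_t)$. The only cosmetic difference is that you sketch a proof of the UCS invariant that the paper simply cites.
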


We stress that $K(a_{0:t-1}) \leq K(a_{0:t})$ is not always true, since adding $a_t$ to $a_{0:t-1}$ may result in higher regularity than $a_{0:t-1}$ has alone. However, $K(a_{0:t-1}) \leq K(a_{0:t})$ is more anticipated since $K$ is on average an increasing function with respect to sequence length \cite{li2008introduction}, motivating the assumption in Proposition \ref{th:search}, and why the algorithm can work well in practice. Proposition \ref{th:search} follows readily by applying uniform-cost search properties, see\if\longversion0 \cite{stefansson2021cdc}.
\else
\space Appendix.
\fi


Efficient searching in Algorithm 1 is possible for moderate horizon lengths, demonstrated numerically in Section \ref{case_study_1}. However, for longer horizons, the procedure becomes intractable. The next algorithm works for longer horizons by focusing on local (stage-wise) complexity instead of the complexity of the whole action sequence. COPS can be seen as a special case of this latter algorithm with only one stage and $\beta_0>0$, defined below in \eqref{eq_modified_objective}, sufficiently low.

\SetKwFor{Loop}{loop}{do}{}

\begin{algorithm}
\footnotesize
\SetKwInOut{Input}{Input}
\Input{System $\mathcal{M}$ as in \eqref{MDP_eq_tuple} and start state $s_0 \in S$.}
\SetKwInOut{Output}{Output}
\Output{Low-complexity optimal action sequence $a_{0:T}$.}
 \textbf{Step 1: Dynamic programming} \\
 $V_{T+1}(s) = 0$ for all $s \in S$\;
 \For{$t=T,T-1,\dots,0$}{
 \ForAll{$s \in S$}{
 \ForAll{$a \in A$}{
 $Q_t(s,a) := r_t(s,a)+V_{t+1}(f_t(s,a))$\;
 }
 $V_t(s) = \max_{a \in A} Q_t(s,a)$\;
 $\Pi(t,s) = \argmax_{a \in A} Q_t(s,a)$\;
 }
 }
 \textbf{Step 2: Uniform-cost search} \\
 $q =\{ \}$ \quad \# priority queue ordered by the cost $c$\;
 Append root node $n_0 = (0,s_0,\emptyset)$ to $q$\;
 \Loop{}{
 Pop first node $n = (t,s_t, a_{0:t-1})$ from $q$\;
 \If{t=T+1}{
 \textbf{return} $a_{0:t-1}$\;
 }
 \ForAll{$a_t \in \Pi(t,s_t)$}{
 Append node $n=(t+1,f_t(s_t,a_t), a_{0:t})$ to $q$\;
 }
 }
 \caption{COPS}
 \label{Alg:SOPA}
\end{algorithm}

\subsection{Stage-Complexity-Aware Program (SCAP)}
SCAP modifies the objective to focus only on local (stage-wise) complexity. More precisely, the modified objective is set to
\begin{equation}\label{eq_modified_objective}
\max_{a_{0:T}} \; \left [ \sum_{t=0}^T r_t(s_t,a_t)-\sum_{k=0}^K \beta_k K(\bold{a}_k)\right ]. 
\end{equation}
Here, $K+1$ is the number of stages in the horizon partition, each with stage length $l$, and $\bold{a}_k := a_{lk:lk+l-1} \in A^l$ is the executed action sequence at stage $k \in \{0,\dots,K\}$, penalised by its execution complexity $K(\bold{a}_k)$ with weight $\beta_k \geq 0$.



The key insight is that the modified objective enables dynamic programming over the stages. More precisely, initialise the value function as $V_{K+1}(s) = 0$ for all $s \in S$ and obtain the value function for the remaining stages using the backward recursion
\begin{align}\label{eq:local_dp}
V_{k}(s_{lk}) = \max_{\bold{a}_k \in A^l} [ r^l_k(s_{lk},\bold{a}_k)-\beta_k K(\bold{a}_k)+\nonumber \\ V_{k+1}(f^l_k(s_{lk},\bold{a}_k)) ].
\end{align}
Here, $f^l_k(s_{lk},\bold{a}_k)$ denotes the state one arrives at by sequentially applying $\bold{a}_k$ starting from $s_{lk}$, and $r^l_k(s_{lk},\bold{a}_k) = \sum_{t=lk}^{lk+l-1} r_t(s_t,a_t)$. Provided $l$ is small enough, the maximisation in \eqref{eq:local_dp} can be done by going through all $\bold{a}_k \in A^l$. 

\subsubsection{Hard-constrained version} An alternative to \eqref{eq_modified_objective} is the hard-constraint objective
\begin{equation}\label{eq:hard_con_version}
\max_{a_{0:T}} \; \sum_{t=0}^T r_t(s_t,a_t), \; \; \text{s.t.} \; \; K(\bold{a}_k) \leq L_k, \; \forall k 
\end{equation}
for some constants $L_k \geq 0$. In this case, we conduct dynamic programming with $V_{K+1}(s) = 0$ for all $s \in S$ and backward~recursion
\begin{align}\label{eq:local_dp_hard}
V_{k}(s_{lk}) = \max_{K(\bold{a}_k) \leq L_k} \Big [ r^l_k(s_{lk},\bold{a}_k)+V_{k+1}(f^l_k(s_{lk},\bold{a}_k)) \Big ].
\end{align}
Using \eqref{eq:local_dp_hard} enables more efficient computations than \eqref{eq:local_dp} since the sequences $\mathcal{A}_k := \{ \bold{a}_k \in A^l : K(\bold{a}_k) \leq L_k \}$ are typically only a fraction of $A^l$. Moreover, $\mathcal{A}_k$ can be computed beforehand by going through all sequences in $A^l$ (tractable for small $l$), or sought using a uniform-cost search (for moderate $l$, see\if\longversion0 \cite{stefansson2021cdc}
\else
\space Appendix
\fi for details).



\subsubsection{Action sequence extraction}
Once $V_k$ has been computed using \eqref{eq:local_dp} or \eqref{eq:local_dp_hard}, $a_{0:T}$ maximising \eqref{eq_modified_objective} or \eqref{eq:hard_con_version} can be readily obtained, for a given start state $s_0 \in S$, by forward simulation over the stages (see\if\longversion0 \cite{stefansson2021cdc}
\else
\space Appendix
\fi for details). 

Algorithm \ref{Alg:local_dp} summarises the procedure for the hard-constrained version. 

\begin{algorithm}
\footnotesize
\SetKwInOut{Input}{Input}
\Input{System $\mathcal{M}$ as in \eqref{MDP_eq_tuple} and start state $s_0 \in S$.}
\SetKwInOut{Output}{Output}
\Output{Action sequence $a_{0:T}$ maximising \eqref{eq:hard_con_version}.} 
\textbf{Step 1: Dynamic programming} \\
 Set $V_{T+1}(s) = 0$ for all $s \in S$\;
 Compute $\mathcal{A}_k = \{ a_{0:l-1} \in A^l : K(a_{0:l-1}) \leq L_k \}$ for $k=0,1,\dots,K$ \;
 \For{$k=K,K-1,\dots,0$}{
 \ForAll{$s \in S$}{
 \ForAll{$a_{0:l-1} \in \mathcal{A}_k$}{
 $Q_k(s,a_{0:l-1}) = r^l_k(s,a_{0:l-1})+V_{k+1}(f^l_k(s,a_{0:l-1}))$\;
 }
 $V_k(s) = \max_{a_{0:l-1}} Q_k(s,a_{0:l-1})$\;
 }
 }
 \textbf{Step 2: Action sequence extraction} \\
 \For{$k=0,1,\dots,K$}{
  Set $a_{lk:lk+l-1} \in \argmax_{a_{0:l-1} \in \mathcal{A}_k} Q_k(s_{lk},a_{0:l-1})$ \;
  $s_{l(k+1)} = f^l_k(s_{lk},a_{lk:lk+l-1})$ \;
 \textbf{return} $a_{0:T}$
 }
 \caption{SCAP}
 \label{Alg:local_dp}
\end{algorithm}


\section{Numerical Evaluations}\label{numerical}
This section presents case studies evaluating the complexity-aware planning algorithms proposed in Section~\ref{algorithms}. 
We describe the test environment in Section \ref{test_env}, evaluate COPS in Section \ref{case_study_1} and then SCAP in Section \ref{case_study_2}. 
The Kolmogorov complexity is estimated using the method from~\cite{zenil2018decomposition}.

\begin{figure*}[t!]
\centering
\begin{subfigure}[b]{0.30\textwidth}
	\includegraphics[width=\textwidth]{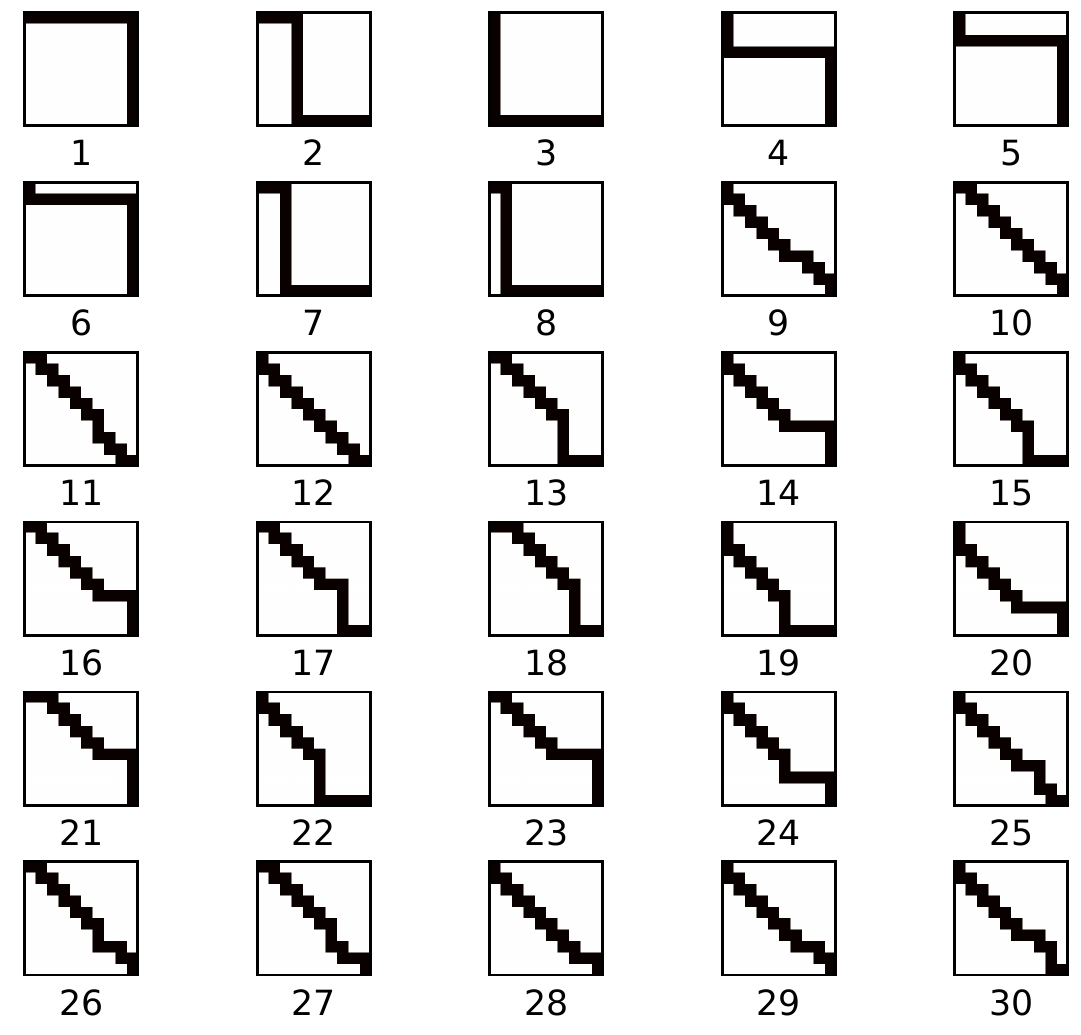}
    \caption{}
\label{fig:test1_fig_rec_side10}
\end{subfigure} 
\hfill
\begin{subfigure}[b]{0.30\textwidth}
	\includegraphics[width=\textwidth]{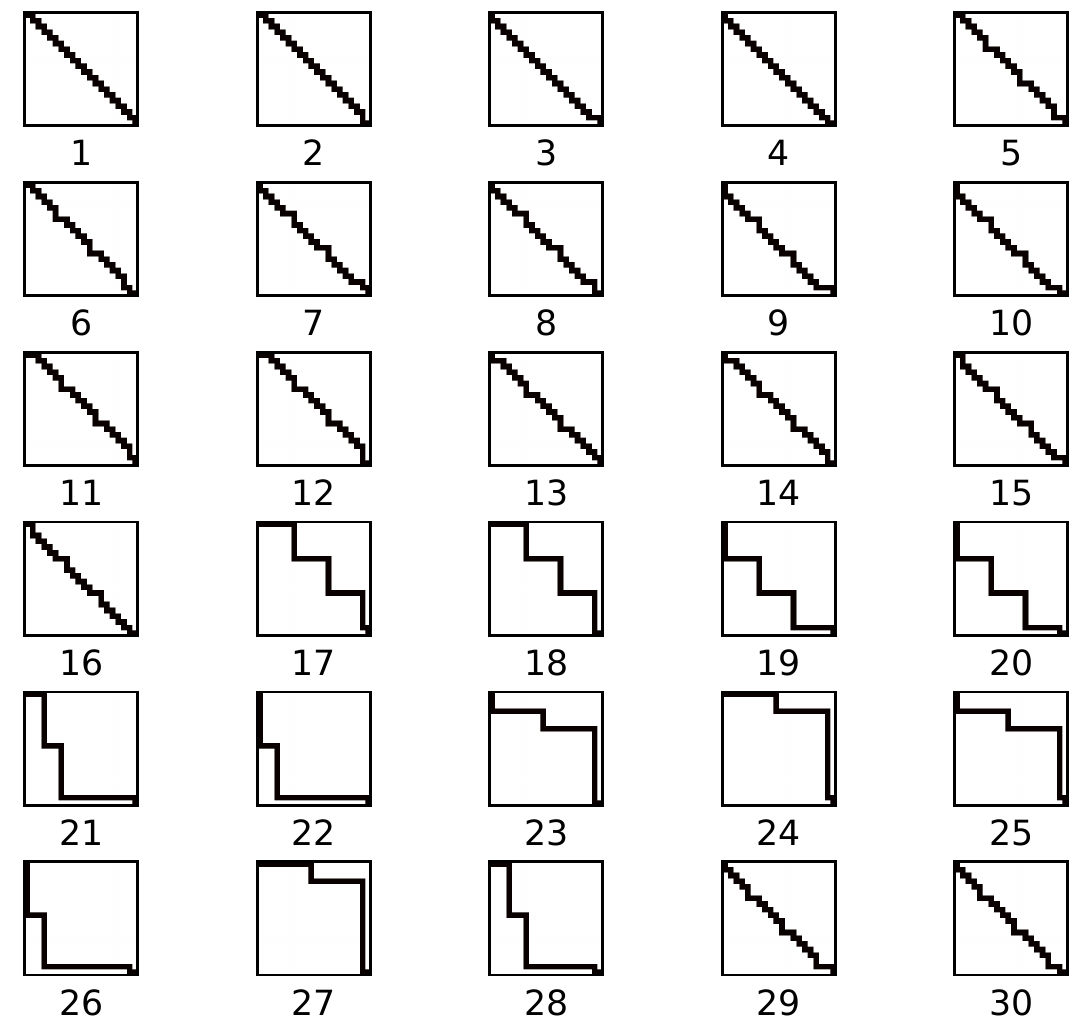}
    \caption{}
\label{fig:test1_fig_rec_side20}
\end{subfigure} 
\hfill
\begin{subfigure}[b]{0.30\textwidth}
	\includegraphics[width=\textwidth]{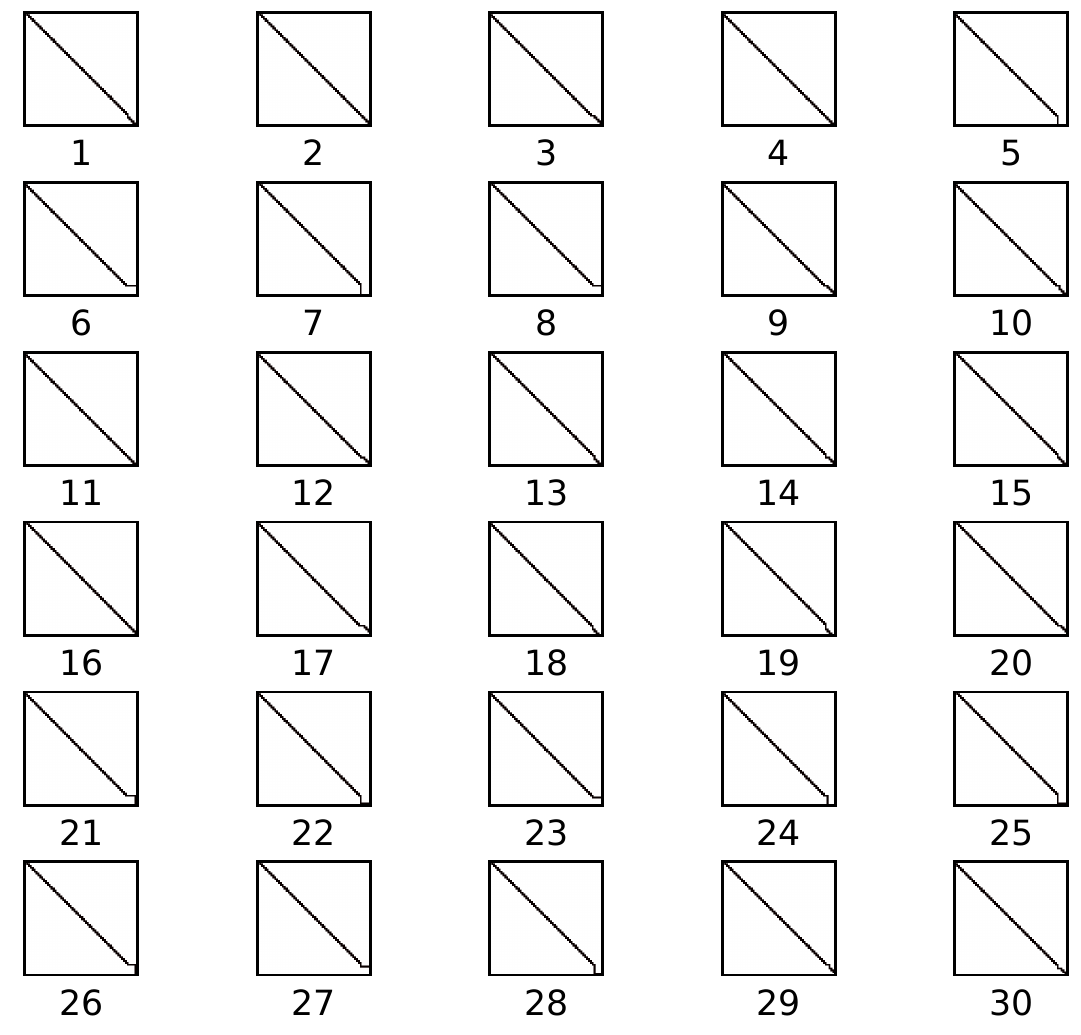}
    \caption{}
\label{fig:test1_fig_rec_side60}
\end{subfigure} 
\caption{Trajectories of the first 30 action sequences using COPS for: (a) small room, (b) medium room, and (c) large room.}
\label{fig:trajs}
\end{figure*}

\begin{table*}[!htb]
\centering
    \caption{Execution complexity for the found action sequences corresponding to Figure \ref{fig:trajs}.}
    \begin{subtable}{.3\linewidth}
      \centering
        \caption{}
\label{table_1}
\begin{tabular}{ll}
\hline
Action sequence & Execution complexity \\ \hline
1-4               & 47.30   \\
5-8               & 47.79     \\
9-12               & 47.91     \\
13-16               & 47.92     \\
17-24               & 48.30     \\
25-30               & 48.36     \\
\hline  
\end{tabular}
    \end{subtable}%
    \hspace{2.5em}
    \begin{subtable}{.3\linewidth}
      \centering
        \caption{}
\label{table_2}
\begin{tabular}{ll}
\hline
Action sequence & Execution complexity \\ \hline
1-4               & 36.49     \\
5-8               & 38.39     \\
9-12               & 38.79     \\
13-16               & 38.39     \\
17-20               & 38.60     \\
21-28               & 62.57     \\
29-30               & 39.75 \\ \hline
\end{tabular}
    \end{subtable} 
    \hspace{2.5em}
        \begin{subtable}{.3\linewidth}
      \centering
        \caption{}
\label{table_3}
\begin{tabular}{ll}
\hline
Action sequence & Execution complexity \\ \hline
1-4               & 58.80   \\
5-8               & 59.96   \\
9-16               & 60.53   \\
17-20               & 60.88   \\
21-28             & 61.20   \\
29-30             & 61.64   \\
\hline  
\end{tabular}
    \end{subtable} 
\end{table*}

\subsection{Test Environment}\label{test_env}
As test environment, we consider a system $\mathcal{M}$ as in \eqref{MDP_eq_tuple} where a robot (the agent) moves around in a room with discrete steps. More precisely, the room is a square with coordinates $N = \{1,,\dots,n\}$ in each direction forming the state space $S = N \times N$. At each time, the robot can move to any of the neighbouring spots or stay, i.e., the actions are $A = \{(\pm 1,0),(0,\pm1),(0,0)\}$
with dynamics $s_{t+1} = s_t+a_t$. If the robot executes an action that would move it outside the room, then it stays at the same spot. The reward is zero everywhere except for all state-action pairs $(s,a)$ that takes the robot to a given goal state $s^\star$, i.e., $f(s,a) = s^\star$, with reward $r(s,a)=1$. We set the goal state $s^\star=(n,n)$ to one of the corners of the room. The robot starts at $s_0 = (1,1)$ and the horizon $T=2(n-1)-1$ is set so that the agent precisely receives a reward for reaching $s^\star$ if acting optimally.

\subsection{Evaluation 1}\label{case_study_1}
\subsubsection{Setup} 
We first consider COPS. To see how  output and calculation time changes with problem size, we study the test environment $\mathcal{M}$ with different room sizes $n=10,20$ and $60$, called the small, medium and large room, respectively. We run the search until it has found 30 low-complexity action sequences, by keep expanding the node tree. Concretely, this is done by modifying the if-statement in Algorithm 1 to a condition appending every found action sequence $a_{0:T}$ into a list until this list has 30 sequences, and then return the list.

\subsubsection{Result} 
The running time for the three rooms are 21 seconds, 16 minutes and 15 minutes, respectively, and corresponding trajectories to the found action sequences are given in Fig. \ref{fig:test1_fig_rec_side10}, \ref{fig:test1_fig_rec_side20} and \ref{fig:test1_fig_rec_side60}, with execution complexities in Tables \ref{table_1}, \ref{table_2}, and \ref{table_3}. 

As a first example, consider the trajectory of the first action sequence found in the small room, labeled 1 in Fig. \ref{fig:test1_fig_rec_side10}. Here, the robot goes right until it reaches the upper right corner and then goes down to $s^\star$. That is, it exploits executing the same actions in batches to lower complexity, reaching an execution complexity of 47.30 seen in Table~\ref{table_1}. We also note that this is the same complexity as action sequences 2-4 in Fig. \ref{fig:test1_fig_rec_side10}. That 1 and 3 have the same complexity is due to symmetry, and the same is true for 2 and 4. However, why e.g., 1 and 2 have the same complexity (apart from the intuition that they both look like low-complexity executions) is unclear. It could be an inherited feature from the Kolmogorov complexity, or a bias from the estimation method, see\if\longversion0 \cite{stefansson2021cdc}
\else
\space Appendix
\fi for a discussion of the latter.

Looking now at all sequences in all three cases, we see that it is in general common to find sequences executing batches of the same actions and then alternate between such batches to lower execution complexity (1-8 in Fig. \ref{fig:test1_fig_rec_side10}, and 17-28 in Fig. \ref{fig:test1_fig_rec_side20}). Another typical feature to lower complexity is to exploit 2-periodic alternation between going down and going right (9-30 in Fig. \ref{fig:test1_fig_rec_side10}, 1-16 and 29-30 in Fig. \ref{fig:test1_fig_rec_side20}, and 1-30 in Fig. \ref{fig:test1_fig_rec_side60}). Also, what type of execution with lowest complexity varies with the room size, i.e., is horizon-dependent. Once again, this could come from the Kolmogorov complexity itself, where varying sequence length may facilitate different compression techniques, but could also originate from the estimation method.

We also note in Tables \ref{table_1} to \ref{table_3} that the algorithm mainly finds sequences in increasing complexity-order as anticipated by Proposition \ref{th:search}. Indeed, only the medium room case causes some unordered sequences, where notably the search finds the higher complexity sequences 21-28 with execution complexity 62.57 before reaching sequences 29-30 with complexity 39.75. This detour, caused by a violation in the assumption in Proposition \ref{th:search}, explains the long search time the medium room yields, longer than the large room although the latter has a larger horizon.

Finally, the cost $c(n) = K(a_{0:t-1})$ enables moderate horizon lengths $T$. In particular, for the large room, the number of possible optimal action sequences is around $2^{(T+1)} \approx 10^{36}$ after dynamic programming. However, the search only iterates $10^6$ nodes to find the first low-complexity sequence, a significant decrease due to the complexity-guiding cost.

\begin{figure*}[]
    \centering 
\begin{subfigure}{0.24\textwidth}
  \includegraphics[width=\linewidth]{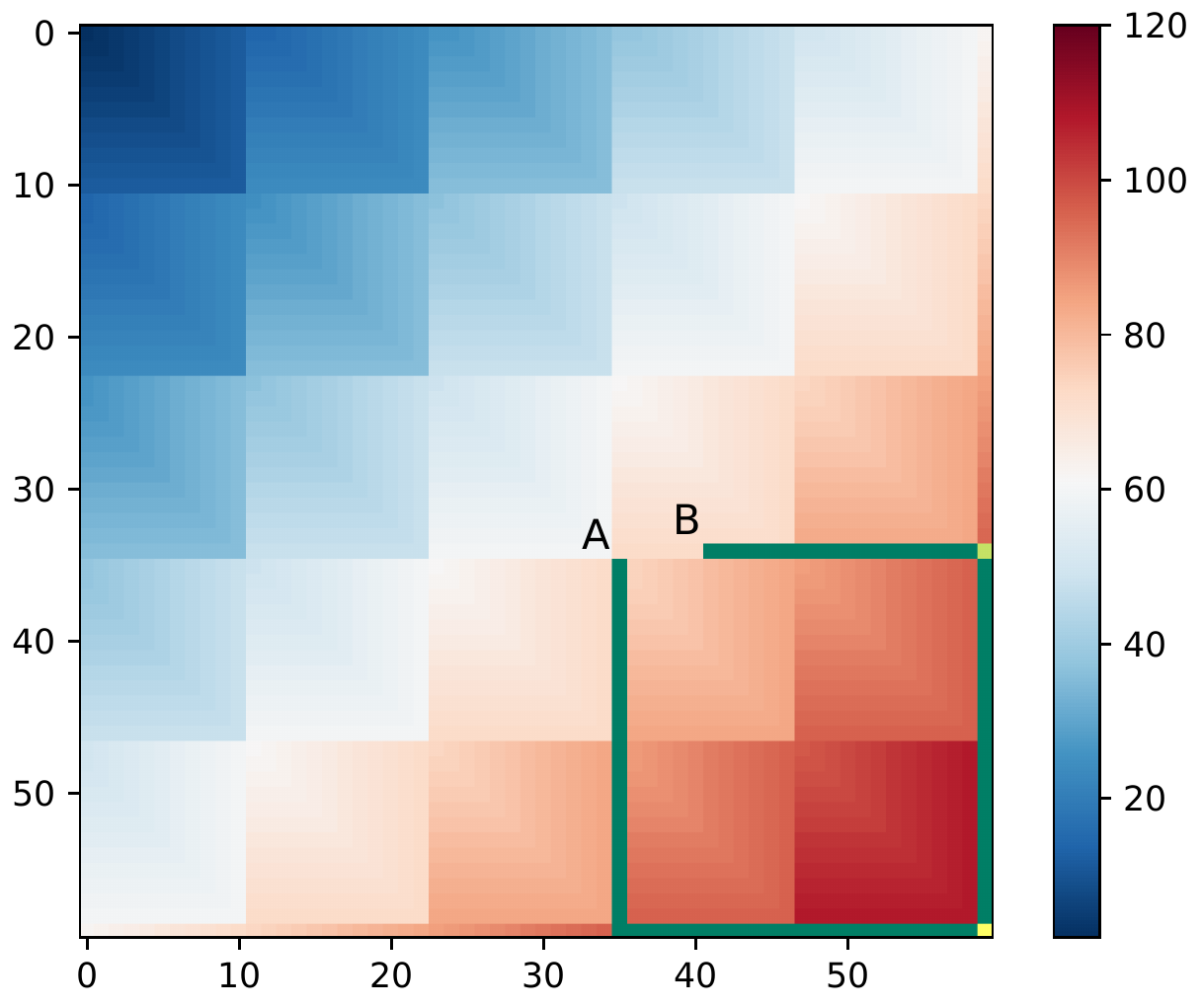}
  \caption{$L=26$}
  \label{fig:1}
\end{subfigure}\hfil 
\begin{subfigure}{0.24\textwidth}
  \includegraphics[width=\linewidth]{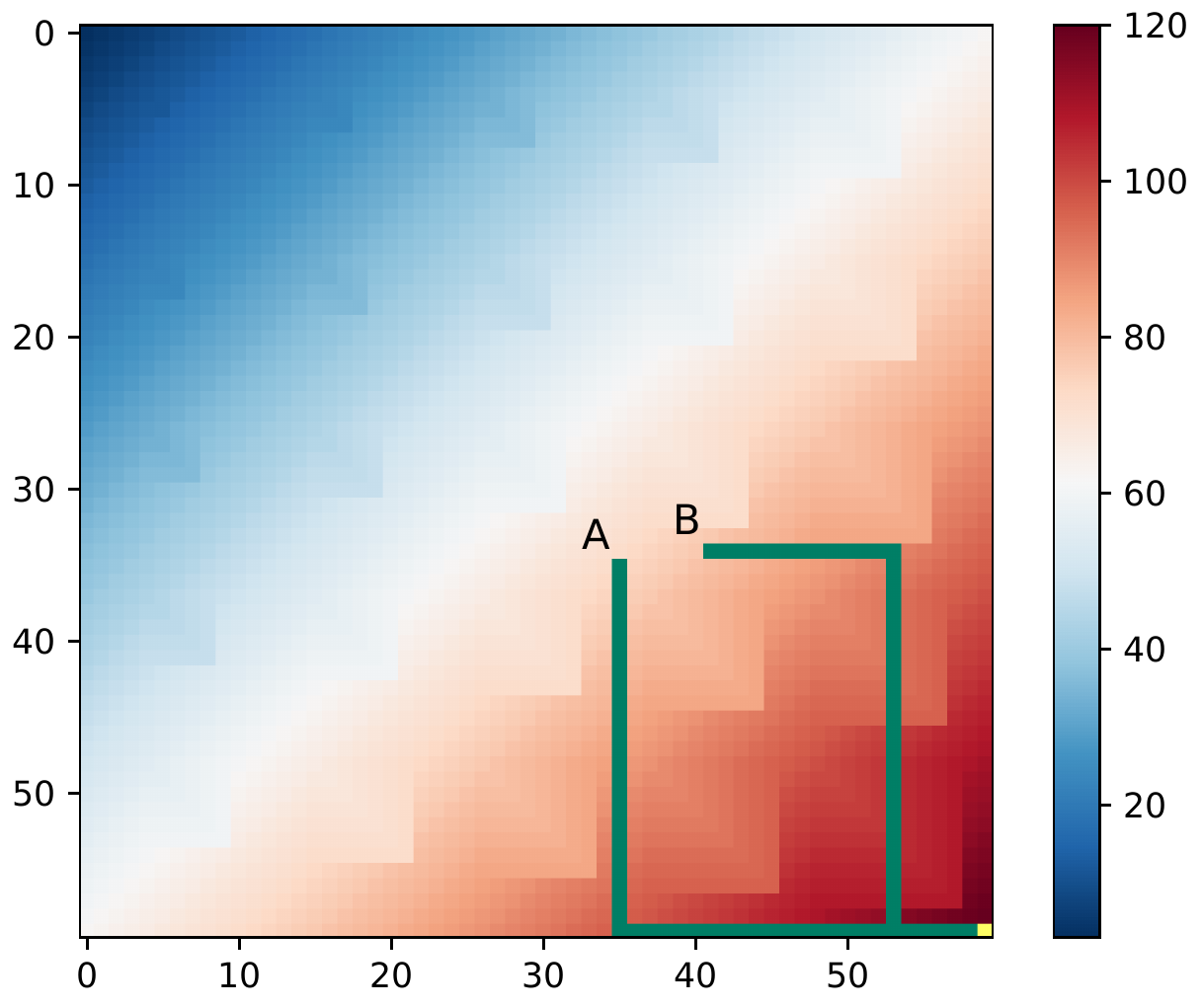}
  \caption{$L=28$}
  \label{fig:2}
\end{subfigure}\hfil 
\begin{subfigure}{0.24\textwidth}
  \includegraphics[width=\linewidth]{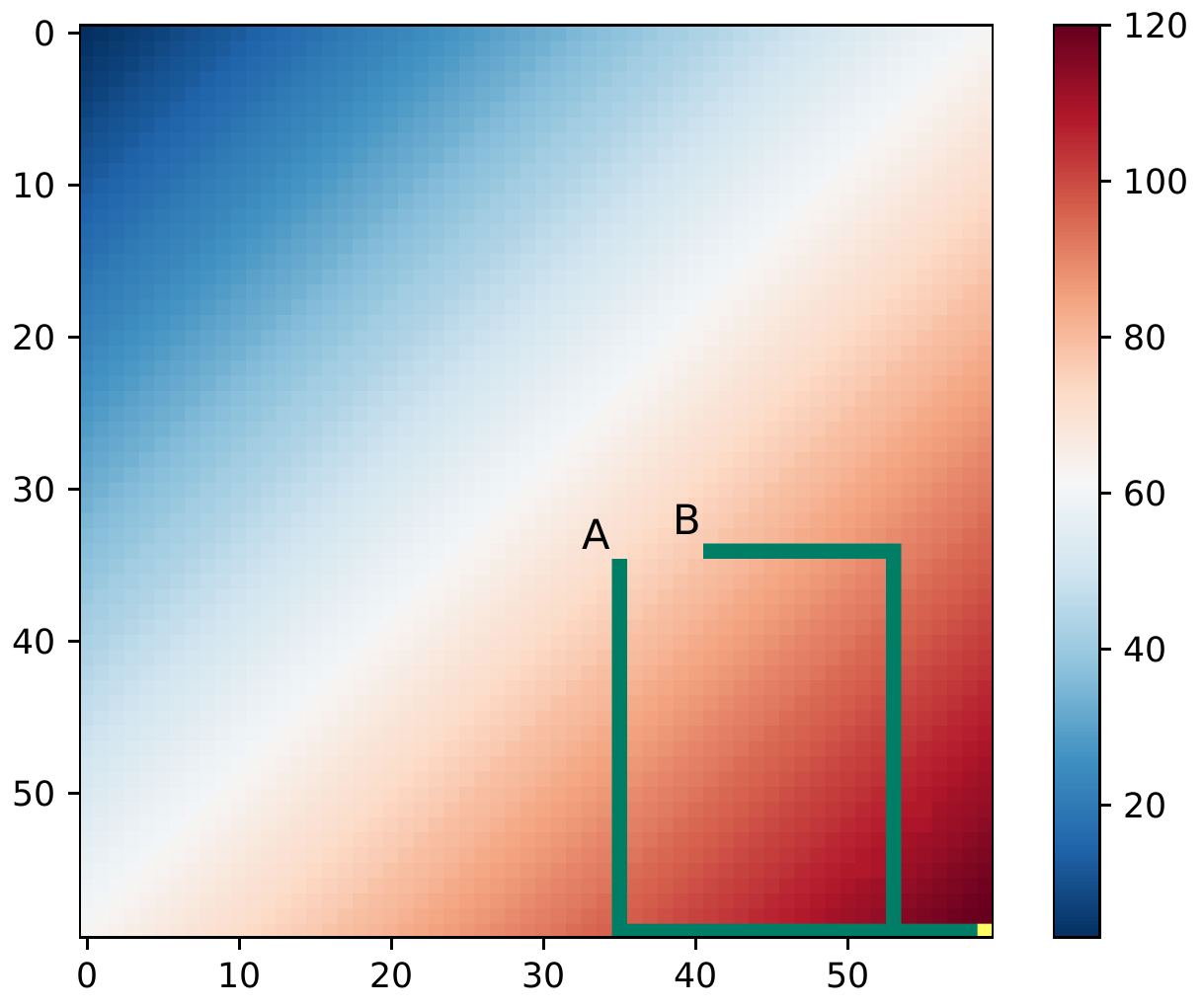}
  \caption{$L=30$}
  \label{fig:1}
\end{subfigure}\hfil 
\begin{subfigure}{0.24\textwidth}
  \includegraphics[width=\linewidth]{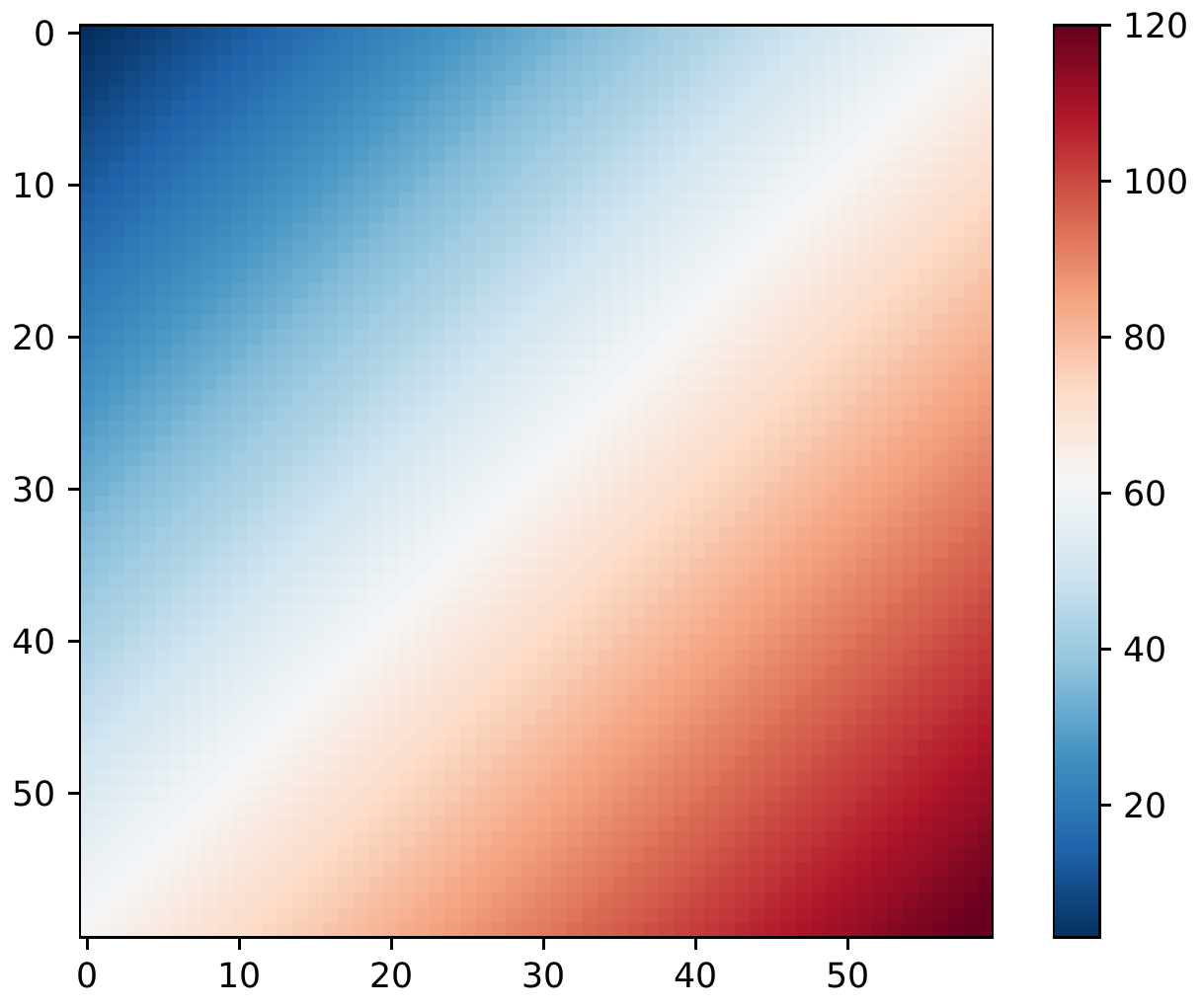}
  \caption{$L=\infty$}
  \label{fig:2}
\end{subfigure}
\caption{$V_0(s)$ for Setup 1 in Section \ref{case_study_2} and $L \in \{26,28,30,\infty\}$. Shown are also two corresponding trajectories (for each finite $L$) starting from A and B, obtained by Algorithm \ref{Alg:local_dp}.}
\label{fig:value_functions_corner}
\end{figure*}

\begin{figure*}[]
    \centering 
\begin{subfigure}{0.24\textwidth}
  \includegraphics[width=\linewidth]{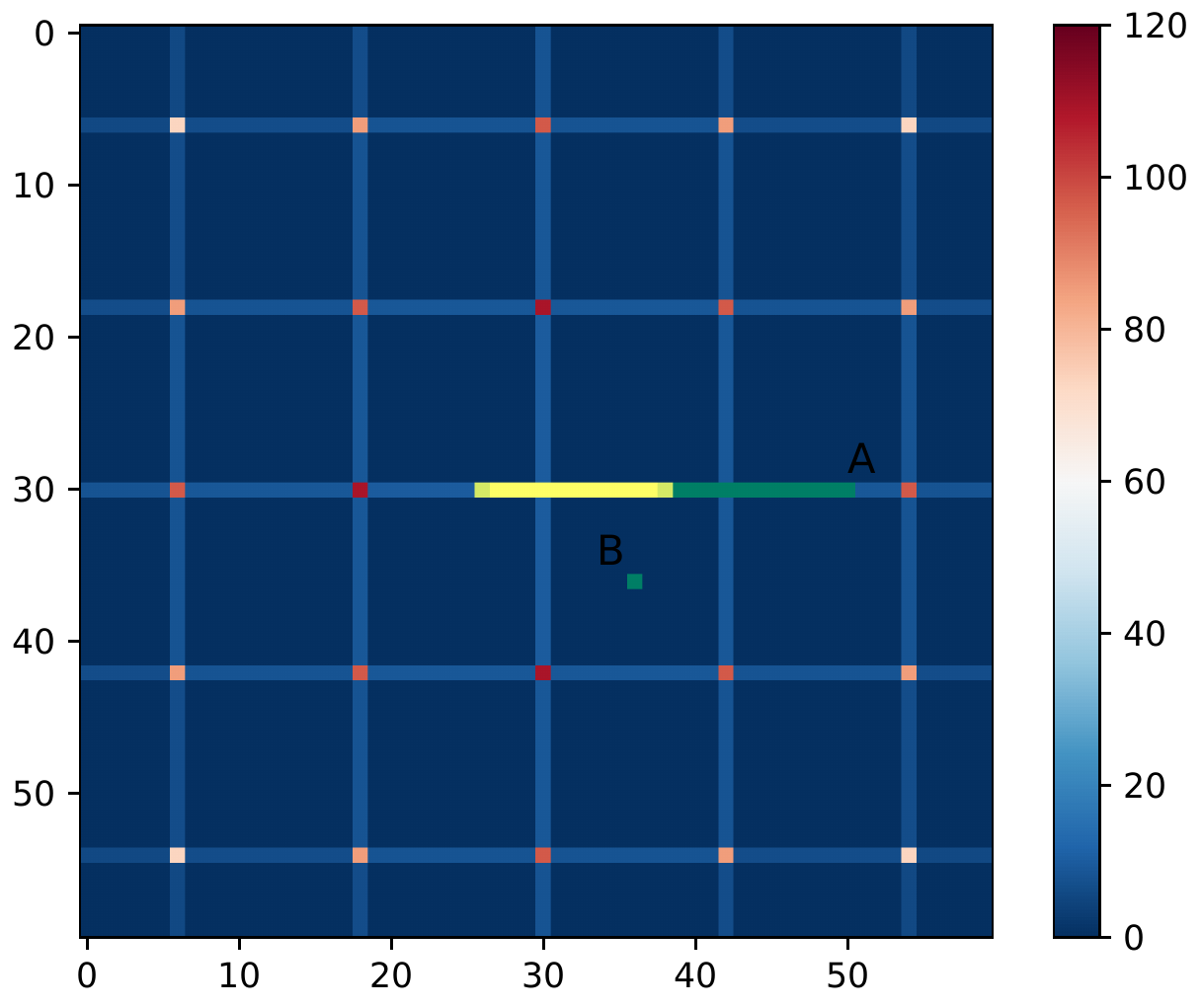}
  \caption{$L=26$}
  \label{fig:1}
\end{subfigure}\hfil 
\begin{subfigure}{0.24\textwidth}
  \includegraphics[width=\linewidth]{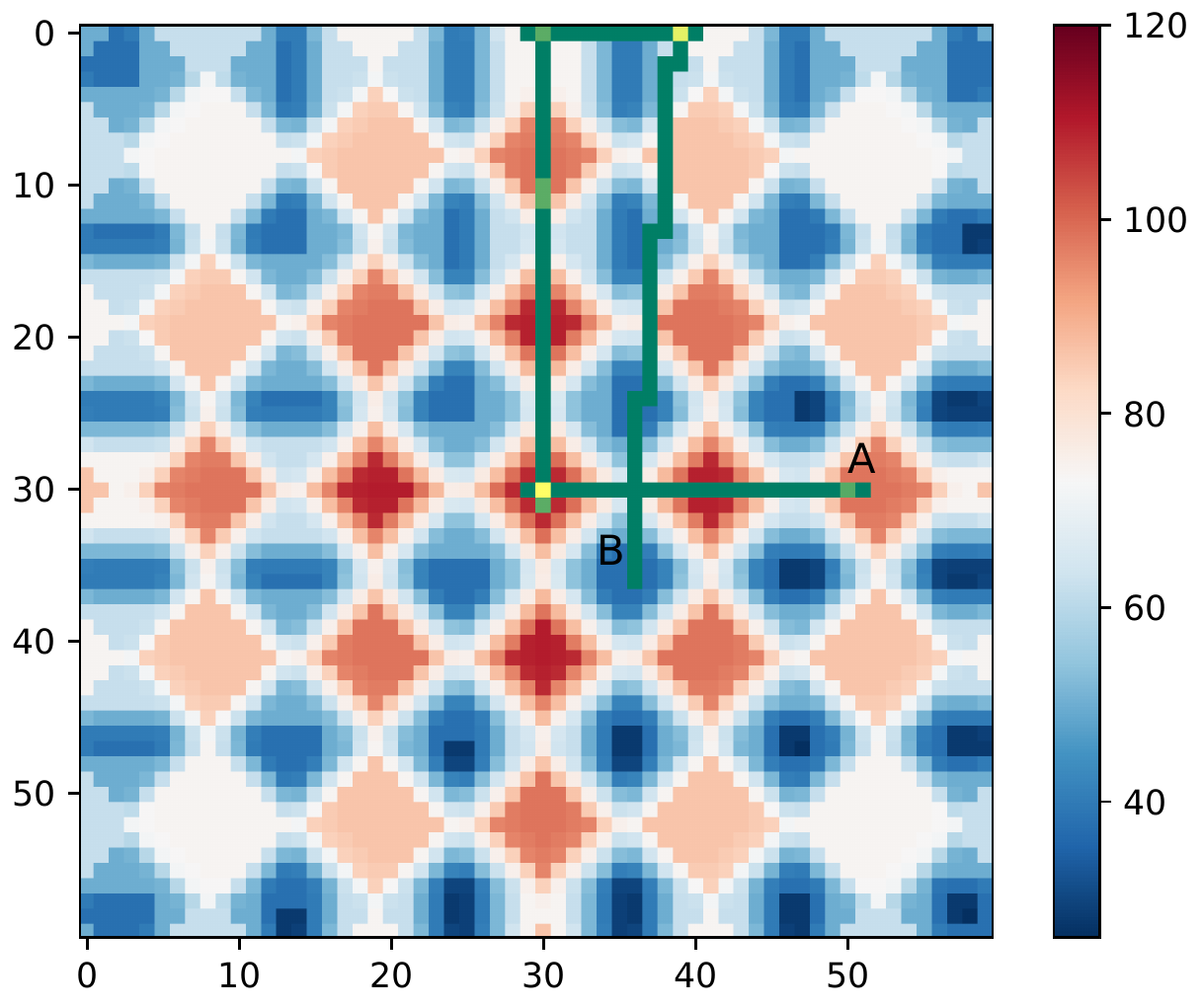}
  \caption{$L=28$}
  \label{fig:2}
\end{subfigure}\hfil 
\begin{subfigure}{0.24\textwidth}
  \includegraphics[width=\linewidth]{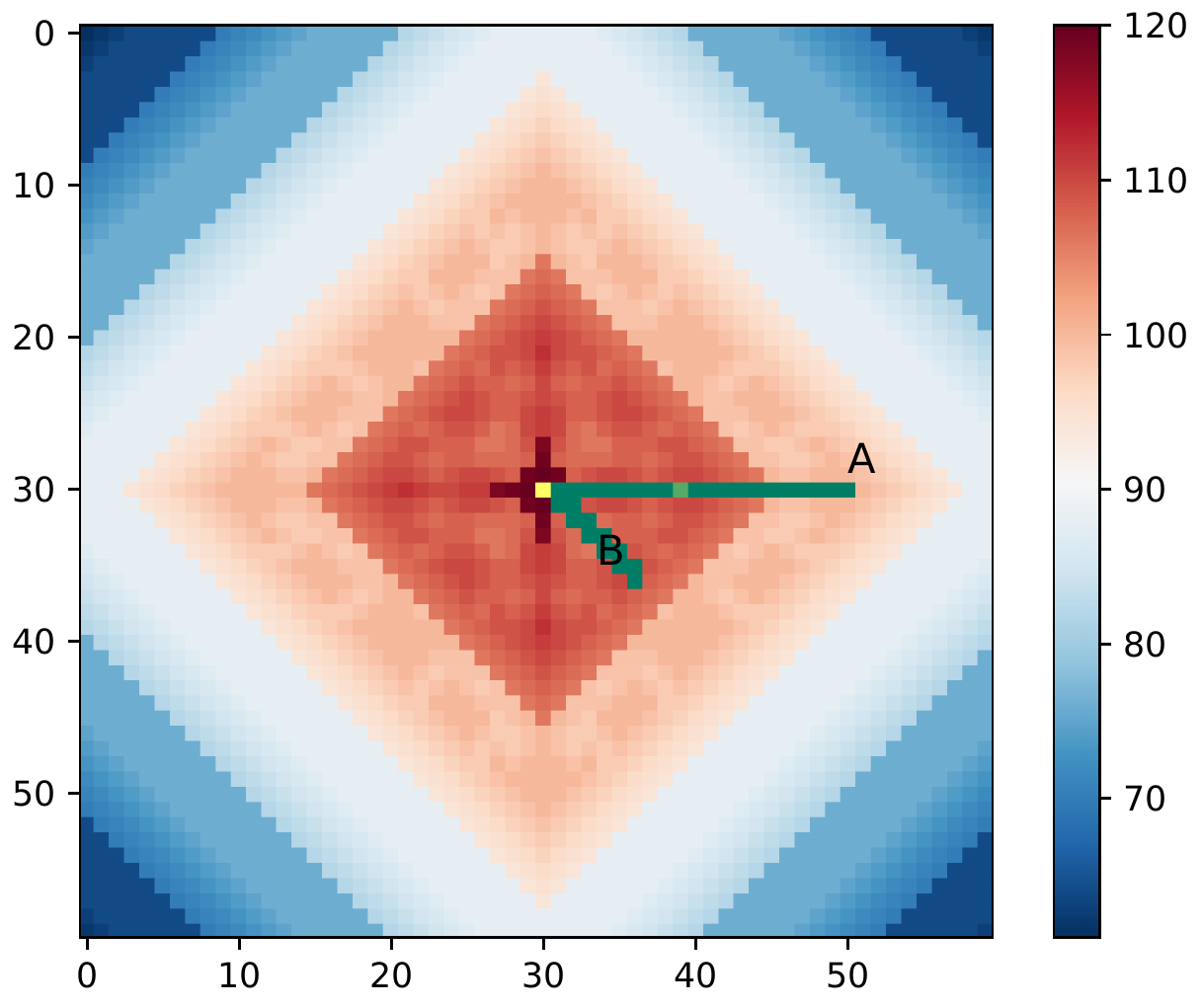}
  \caption{$L=30$}
  \label{fig:1}
\end{subfigure}\hfil 
\begin{subfigure}{0.24\textwidth}
  \includegraphics[width=\linewidth]{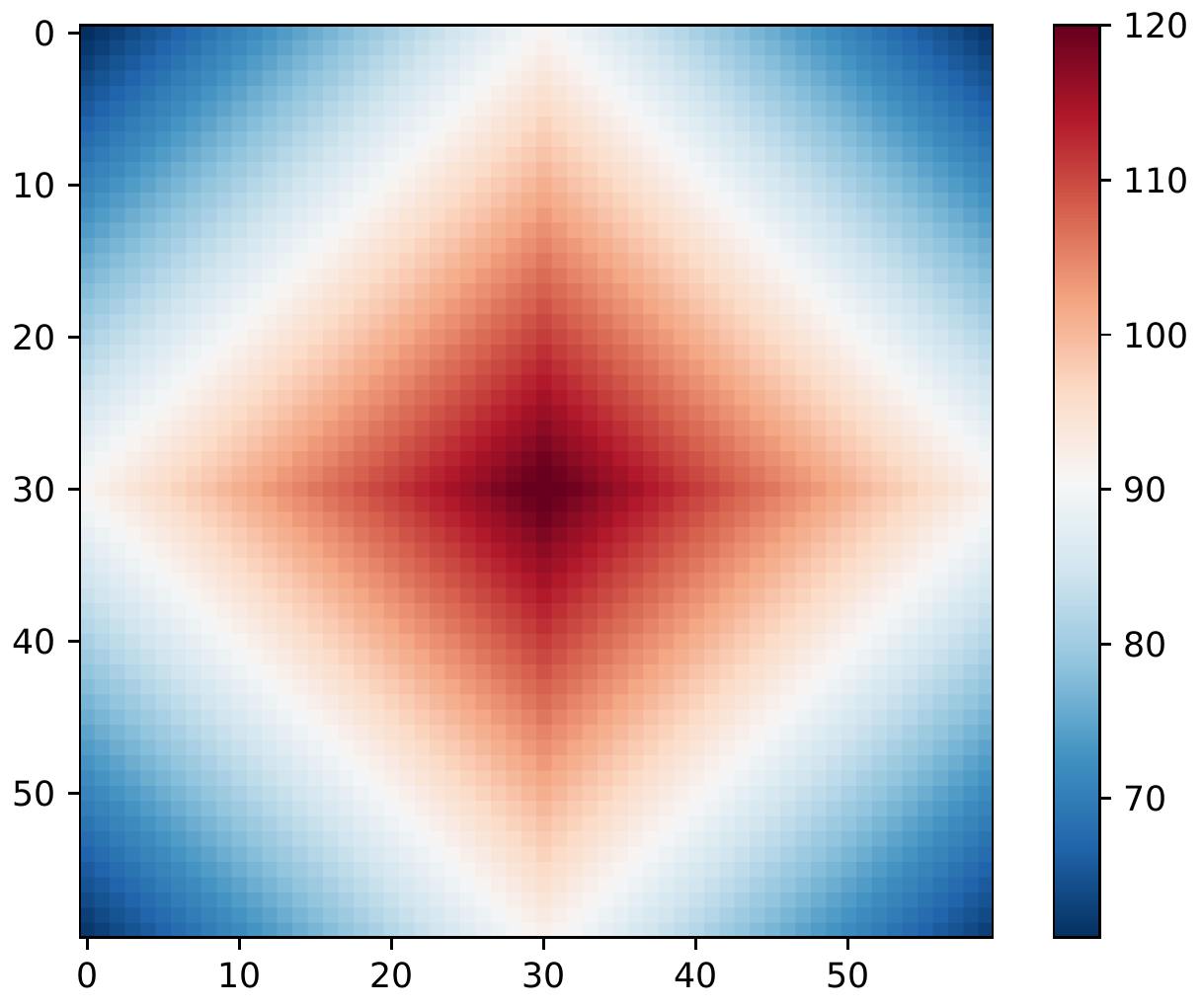}
  \caption{$L=\infty$}
  \label{fig:2}
\end{subfigure}
\caption{Same as Fig 2. except for Setup 2.
Shown are also two trajectories starting from A and B, obtained by Algorithm~\ref{Alg:local_dp}.}
\label{fig:value_functions_middle}
\end{figure*}

\subsection{Evaluation 2}\label{case_study_2}
We continue with SCAP, considering the hard-constrained version given by Algorithm \ref{Alg:local_dp}, calculating $\mathcal{A}_k$ by going through all sequences. The aim of the investigation is mainly to see how the complexity-performance tradeoff affects the behaviour. Towards this, we consider two similar setups, both considering the large room but where the first setup has the goal state $s^\star$ in the lower right corner of the room and the second setup has $s^\star$ in the middle of the room.
\subsubsection{Setup 1: $s^\star$ in the corner}
In Setup 1, we penalise each stage equally with complexity limit $L_k=L$, having stage length $l=12$, and consider the large room from Section \ref{case_study_1} with a slightly larger horizon $T = 119$ to be compatible with the stage partition (i.e., $l \cdot (K+1) = T+1$), setting $K=9$ accordingly. The change in tradeoff comes into play by varying $L$ with values $26, 28$ and 30. These values are picked since they illustrate the tradeoff well, allowing almost no action sequences for $L=26$ while many for $L=30$.
 

\subsubsection{Result} 

Value functions $V_0(s)$ are plotted in Fig. \ref{fig:value_functions_corner} for $L \in \{26,28,30,\infty\}$, where $L=\infty$ is as a reference obtained by ordinary dynamic programming maximising the total reward $\sum_{t=0}^T r_t(s_t,a_t)$. Blue (red) colour encodes low (high) value.

For $L=26$, the room is divided into squares of length $l=12$ due to the stage partition and the low complexity limit $L=26$. More precisely, at each stage $k$, $\mathcal{A}_k$ consists of only five action sequences: the ones executing only one action throughout the stage. This constraint implies that the robot sometimes has to wait at the walls, yielding the discontinuous jumps in the value function seen in Fig. \ref{fig:value_functions_corner} (a). 
Consider for example the two trajectories in Fig. \ref{fig:value_functions_corner} (a) generated by Algorithm \ref{Alg:local_dp}, coloured in a green-to-yellow scale, where states more visited are more yellow. The robot starting at A reaches $s^\star$ without waiting. The robot starting at B goes right two stages and then down two stages, but hits the wall at its second going-right stage before the stage is done. This causes waiting at the wall and thus longer time to reach $s^\star$, even though $B$ is actually closer to $s^\star$ than A.

For $L=28$, the squares in Fig. \ref{fig:value_functions_corner} (b) are greatly diminished compared to $L=26$, due to additional possible action sequences in $\mathcal{A}_k$. Here, a robot starting at B does not wait anymore at the wall. We note also that the squares are more evident near the goal state since the robot has less time here to adapt and avoid wall waiting. 

Finally, for $L=30$, $\mathcal{A}_k$ has increased so much that the difference with the unconstrained case is minor, and the trajectories from A and B, already being optimal at $L=28$, has not changed.

\subsubsection{Setup 2: $s^\star$ in the middle}
Setup 2 is identical to Setup 1, except that $s^\star$ is placed in the middle of the room. Thus, the robot can no longer heavily rely on the wall dynamics as in Setup 1. This notably changes the outcome.

\subsubsection{Result}
Similar to Setup 1, we plot $V_0(s)$ for different $L$ in Fig. \ref{fig:value_functions_middle}. 

For $L=26$, $V_0(s)$ has higher values in a grid-like pattern, with highest value in the middle of the room at $s^\star$ and high values at points being away a multiple of $l=12$ from $s^\star$. At all these points, the robot can (using the five sequences in $\mathcal{A}_k$) reach and stay at $s^\star$ resulting in a high total reward. Between those points are also states with higher value. Here, the robot can reach but not stay at $s^\star$; Instead, the robot oscillates back and forth crossing the $s^\star$ multiple times. This is seen for the robot starting at A in Fig. \ref{fig:value_functions_middle} (a). For the remaining states, the robot can never reach the goal state, due to the heavy complexity constraint. This is the case for the robot starting and staying at B. Thus, for low complexity limits, the robot may never reach the intended objective. 

As we increase $L$ to $28$, the grid-like pattern is expanded to groups of states due to the increased set of admissible action sequences. A robot starting at A reaches now and stays at $s^\star$ as can be seen in Fig. \ref{fig:value_functions_middle} (b). The robot starting at B also reaches and stays at $s^\star$, but takes a detour to be within the local complexity limit, reaching the upper wall and then down to $s^\star$. Thus, for low complexity limits, the robot may execute an action sequence which is locally of low complexity, but globally quite complex. 

Finally, for $L=30$, $V_0(s)$ is similar to the unconstrained case, except that the stage length $l=12$ together with the complexity constraint still partitions the space into entities of length $l=12$, similar to Fig. \ref{fig:value_functions_corner} (a) (but in this case into diamond-like shapes). The robots starting at A and B now reaches the goal fast with the high complexity limit.

\section{Conclusion}\label{conclusion}

In this paper, we have considered complexity-aware planning for DFA with rewards as outputs. We first defined a complexity measure for deterministic policies based on Kolmogorov complexity. Kolmogorov complexity is used since it can detect computational regularities, a typical feature for optimal policies. 

We then introduced a complexity-aware planning objective based on our complexity measure, yielding an explicit trade-off between a policy's performance and complexity. It was proven that maximising this objective is non-trivial in the sense that dynamic programming is infeasible. 

We presented two algorithms for obtaining low-complexity policies, COPS and SCAP. COPS finds a policy with low complexity among all optimal policies, following a two-step procedure. In the first step, the algorithm finds all optimal policies, without any complexity constraints. Dropping the complexity constraints, this step can be done using dynamic programming. The second step runs a uniform-cost search over all optimal policies, guided by a complexity-driven cost, favouring low-complexity policies. SCAP modifies instead the objective penalising policies locally for executing complex manoeuvres. This is done by partitioning the horizon into stages and enforce local complexity constraints over these stages, where the partition enables dynamic programming. We illustrated and evaluated our algorithms on a simple navigation task where a mobile robot tries to reach a certain goal state. Our algorithms yield low-complexity policies that concur with intuition.


Future work includes comparisons with other estimation methods of the Kolmogorov complexity and how uncertainty (e.g., stochasticity) and feedback (e.g., receding horizon) can be incorporated into the existing framework. Finally, another challenge is to tractably maximise the complexity-aware planning objective in the general case.


\bibliographystyle{plain}
\bibliography{Ref3}

\if\longversion1 

\section{Appendix}
\subsection{Proofs}
\subsubsection{Proof of the observation in Example \ref{ex_simple}}
We prove that \eqref{eq:alg_ec} and \eqref{eq:simple_optimal_policies} are equivalent for $\beta>0$ sufficiently small.
\begin{proof}
For brevity, let
\begin{equation*}
R(a_{0:T}) = \sum_{t=0}^T r_t(s_t,a_t)
\end{equation*} 
denote the total reward, subject to the dynamics $s_{t+1} = f_t(s_t,a_t)$ and start state $s_0 \in S$. Note that the statement is trivially true if $R(a_{0:T})$ is constant (over all $a_{0:T} \in A^{T+1}$), since \eqref{eq:alg_ec} and \eqref{eq:simple_optimal_policies} are then equivalent for all $\beta>0$. Hence, we may assume that $R(a_{0:T})$ is not constant. In particular, the difference between the maximum of $R(a_{0:T})$ and its second highest value is then positive:
\begin{equation*}
d:= \max_{a_{0:T} \in A^{T+1}} R(a_{0:T}) - \max_{a_{0:T} \in A^{T+1} \backslash \Omega} R(a_{0:T}) >0,
\end{equation*}
where $\Omega := \argmax_{a_{0:T} \in A^{T+1}} R(a_{0:T})$. 

We start by showing that a maximiser of \eqref{eq:alg_ec} is a minimiser of \eqref{eq:simple_optimal_policies} given that $\beta>0$ is sufficiently small (to be specified). Towards this, let $a^*_{0:T}$ be a maximiser of \eqref{eq:alg_ec}. Assume by contradiction that $a^*_{0:T} \notin \Omega$. Fix any $\tilde{a}_{0:T} \in \Omega$ and observe~that
\begin{align}
R(\tilde{a}_{0:T}) - R(a^*_{0:T}) \geq d.
\end{align}
Moreover, since $a^*_{0:T}$ is a maximiser of \eqref{eq:alg_ec}, we also have~that
\begin{align*}
[R(a^*_{0:T}) - \beta K(a^*_{0:T})] - [R(\tilde{a}_{0:T}) - \beta K(\tilde{a}_{0:T})] \geq 0,
\end{align*}
and therefore
\begin{align}\label{simple_optimal_policy_proof_eq1}
\beta [K(\tilde{a}_{0:T})-K(a^*_{0:T})] \geq R(\tilde{a}_{0:T}) - R(a^*_{0:T}) \geq d.
\end{align}
Note that \eqref{simple_optimal_policy_proof_eq1} does not hold for
\begin{equation}\label{simple_optimal_policy_proof_eq2}
\beta < \frac{d}{\max_{a_{0:T}} K(a_{0:T}) - \min_{a_{0:T}} K(a_{0:T})}.
\end{equation}
Thus, for $\beta>0$ sufficiently small specified by \eqref{simple_optimal_policy_proof_eq2}, a maximiser $a^*_{0:T}$ of \eqref{eq:alg_ec} must belong to $\Omega$, and since
\begin{align*}
\argmax_{a_{0:T} \in \Omega} [R(a_{0:T}) - \beta K(a_{0:T})] = \argmin_{a_{0:T} \in \Omega} K(a_{0:T}),
\end{align*}
we conclude that $a^*_{0:T}$ is a minimiser of \eqref{eq:simple_optimal_policies}. 

We now conversely show that a minimiser of \eqref{eq:simple_optimal_policies} is also a maximiser of \eqref{eq:alg_ec} given that $\beta>0$ is sufficiently small specified by \eqref{simple_optimal_policy_proof_eq2}. More precisely, let $a^*_{0:T}$ be a minimiser of \eqref{eq:simple_optimal_policies}. Let $\tilde{a}_{0:T}$ be any maximiser of \eqref{eq:alg_ec}. By above, $\tilde{a}_{0:T} \in \Omega$, hence $R(a^*_{0:T}) = R(\tilde{a}_{0:T})$, and since $K(a^*_{0:T}) \leq K(\tilde{a}_{0:T})$ we get
\begin{align*}
R(a^*_{0:T})-\beta K(a^*_{0:T}) \geq R(\tilde{a}_{0:T})- \beta K(\tilde{a}_{0:T}),
\end{align*}
from which we conclude that $a^*_{0:T}$ is a maximiser of \eqref{eq:alg_ec}. 

By above, we conclude that \eqref{eq:alg_ec} and \eqref{eq:simple_optimal_policies} are equivalent given that $\beta>0$ is sufficiently small specified by \eqref{simple_optimal_policy_proof_eq2}. This completes the proof.
\end{proof} 

\subsubsection{Proof of Proposition \ref{co:no_dyn_prog}}
Proposition \ref{co:no_dyn_prog} follows immediately from the following lemma.

\begin{lemma}\label{th:no_dyn_prog}
For $T \in \mathbb{N}$ sufficiently large, there do not exist functions $\{ h_i \}_{i=1}^T$ such that
\begin{equation}\label{eq:th4_main}
K(x_{1:T}) = \sum_{i=1}^T h_i(x_i)
\end{equation}
holds for all sequences $x_{1:T} \in A^{T}$.
\end{lemma}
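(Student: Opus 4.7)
The plan is to derive a contradiction using a ``symmetrization'' trick combined with two standard facts about Kolmogorov complexity. The key observation is that, under the assumed decomposition, swapping two alphabet symbols throughout a sequence forces the sum $K(x_{1:T}) + K(\bar{x}_{1:T})$ (where $\bar{x}$ denotes the swap-image of $x$) to be a constant independent of $x_{1:T}$. This rigidity then collides with the familiar gap between trivially describable strings and incompressible ones.

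Concretely, I would pick two distinct symbols $a,b \in A$ (assuming $|A| \geq 2$, the only interesting case) and, for each $x_{1:T} \in \{a,b\}^T$, define $\bar{x}_{1:T}$ by interchanging every $a$ with $b$. Adding the assumed identity $K(x_{1:T}) = \sum_i h_i(x_i)$ for $x$ and its swap $\bar{x}$ yields
\begin{equation*}
K(x_{1:T}) + K(\bar{x}_{1:T}) \;=\; \sum_{i=1}^{T} \bigl( h_i(a) + h_i(b) \bigr) \;=:\; C,
\end{equation*}
a quantity independent of the particular choice of $x_{1:T}$.

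Next I would instantiate this identity at two extreme sequences. Taking $x_{1:T} = a^T$, both $a^T$ and $b^T$ admit programs of length $O(\log T)$ (encode the symbol plus the binary representation of $T$), so $C = K(a^T) + K(b^T) \leq O(\log T)$. For the opposite direction, a counting argument on $A$-ary programs shows that some $x^* \in \{a,b\}^T$ satisfies $K(x^*) \geq T \log 2 / \log|A| - O(1) = \Omega(T)$, since the number of $A$-ary programs of length below this threshold is smaller than the number $2^T$ of binary candidates in $\{a,b\}^T$. Plugging $x^*$ into the identity then forces $C \geq K(x^*) = \Omega(T)$. Combining the two bounds on $C$ gives a contradiction for all sufficiently large $T$.

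I do not foresee a serious obstacle here. The swap-symmetrization identity is purely algebraic, and both Kolmogorov-complexity facts, namely the trivial $O(\log T)$ upper bound for repetitive strings and the counting lower bound producing an incompressible binary string, are completely standard. The only mild care needed is the factor $\log 2 / \log|A|$ in the counting step when $|A|>2$, which only affects constants and is absorbed into $\Omega(T)$, so the contradiction still holds for all large enough $T$.
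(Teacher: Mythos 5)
Your proof is correct and follows essentially the same route as the paper's: symmetrize the assumed decomposition under the two-symbol swap so that $K(x_{1:T})+K(\bar{x}_{1:T})$ collapses to the constant $K(a^T)+K(b^T)=O(\log T)$, then contradict this with an incompressible string from the counting argument. If anything, your step $C \geq K(x^*)$ via the nonnegativity of $K(\bar{x}^*)$ is a small simplification over the paper, which instead invokes the invariance theorem to show $K(\bar{x}^*) \geq K(x^*) - c$ before reaching the same contradiction.
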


\begin{proof}
We may without loss of generality restrict ourselves to the binary case $A = \{0,1\}$. We prove the lemma by contradiction. Let $\phi$ be the partial computable function in Definition \ref{definition_kolmogorov_complexity}. Let $T \in \mathbb{N}$ be arbitrary. By a simple counting argument, there exists at least one $x^*_{1:T}$ such that $K_\phi(x^*_{1:T}) \geq T$ (e.g., Theorem~2.2.1 in \cite{li2008introduction}). Also, the compliment $\tilde{x}_{1:T}$ of $x^*_{1:T}$, defined by inverting all zeros and ones in $x^*_{1:T}$, has complexity close to $x^*_{1:T}$ in the sense that
\begin{equation}\label{eq:th4_proof_1}
| K_\phi(\tilde{x}_{1:T}) - K_\phi(x^*_{1:T}) | \leq c_{\hat{M}}
\end{equation}
holds for some constant $c_{\hat{M}}$ independent of $x^*_{1:T}$ and $T$. To see this, let $M$ be the Turing machine that takes a binary string $p$, inverts all zeros and ones, and outputs the result $\tilde{p}$. In particular, $M(x^*_{1:T}) = \tilde{x}_{1:T}$.\footnote{For brevity, we use, for a given Turing machine $M$ with corresponding partial computable function $\psi$, the notation $M(p)$ to denote $\psi(p)$ for~${p \in A^*}$.} Let in turn $\hat{M}$ be the Turing machine that given input $p$ simulates the universal Turing machine $U$ corresponding to $\phi$, obtains the output $U(p)$ and then feeds it as input to $M$. Then $U(p) = x^*_{1:T}$ implies $\hat{M}(p) = \tilde{x}_{1:T}$. Hence, letting $\psi_{\hat{M}}$ be the corresponding partial computable function of $\hat{M}$, we have by the invariance~theorem, 
\begin{equation}\label{eq:th4_proof_2}
K_\phi(\tilde{x}_{1:T}) \leq K_{\psi_{\hat{M}}}(\tilde{x}_{1:T}) + c_{\hat{M}} \leq K_{\phi}(x^*_{1:T}) + c_{\hat{M}},
\end{equation}
where $c_{\hat{M}}$ is independent of $x^*_{1:T}$ and $T$. Furthermore, since $M(\tilde{x}_{1:T}) = x^*_{1:T}$, $U(p) = \tilde{x}_{1:T}$ implies $\hat{M}(p) = x^*_{1:T}$, we have again by the invariance theorem,
\begin{equation}\label{eq:th4_proof_3}
K_\phi(x^*_{1:T}) \leq K_\phi(\tilde{x}_{1:T}) + c_{\hat{M}}.
\end{equation}
Combining equation \eqref{eq:th4_proof_2} and \eqref{eq:th4_proof_3} yields \eqref{eq:th4_proof_1}. 

We now show that $K_\phi(0^T) \leq \log_2(T)+c_0$, for some constant $c_0$ independent of $T$, where $0^T$ is the string consisting of $T$ zeros. To see this, let $p_T \in A^*$ be the binary string corresponding to $T$ and note that $\ell(p_T) \leq \log_2(T)+c$, where $c$ is independent of $T$.\footnote{This correspondence between $A^*$ and $\mathbb{N}$ is given by the length-increasing lexicographic ordering where $x = 2^{n+1}-1+\sum_{i=0}^n a_i 2^i$ in $ \mathbb{N}$ corresponds to $a_n \dots a_1 a_0$ in $A^*$, see Chapter 1.4 in \cite{li2008introduction}.} Consider the Turing machine $N$ that given a binary string $p$ as input, calculates the corresponding integer $n$ and outputs $0^n$. In particular, $N(p_T)=0^T$. Let $\psi_N$ be the corresponding partial computable function of $N$. By the invariance theorem,
\begin{align}
K_\phi(0^T) \leq K_{\psi_N}(0^T)+c_N \leq \nonumber \\ \log_2(T)+c+c_N =: \log_2(T)+c_0.
\end{align}
Similarly, $K_\phi(1^T) \leq \log_2(T)+c_1$, for some constant $c_1$ independent of $T$, where $1^T$ is the string consisting of $T$~ones.

Assume now that \eqref{eq:th4_main} holds. Then,
\begin{align*}
2T-c_{\hat{M}} \leq K_\phi(x^*_{1:T})+K_\phi(\tilde{x}_{1:T}) = \sum_{i=1}^T h_i(0)+h_i(1) = \\ K_\phi(0^T)+K_\phi(1^T) \leq 2\log_2(T)+c_0+c_1.
\end{align*}
This yields a contradiction for $T$ sufficiently large, which proves the lemma.
\end{proof}

\subsubsection{Proof of Proposition \ref{th:search}}
We need the following lemma for uniform-cost search.
\begin{lemma}[\cite{aibook}]\label{lemma:search}
Assume the cost $c$ in a uniform-cost search is such that $c(n) \leq c(n')$ for any node $n$ and any child node $n'$ of $n$. Then the uniform-cost search terminates at a node with lowest cost.
\end{lemma}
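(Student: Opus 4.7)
The plan is to prove this by a standard Dijkstra-style invariant argument. First, I fix notation: uniform-cost search maintains a priority queue ordered by the cost $c$, repeatedly pops the node $n$ of minimum cost, returns it if it is a terminal (goal) node, and otherwise inserts all children of $n$ into the queue. I interpret the statement ``terminates at a node with lowest cost'' as: the first terminal node ever popped, call it $n^*$, satisfies $c(n^*) \leq c(\tilde n)$ for every terminal node $\tilde n$ reachable from the root.

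The core step is an invariant to be proved by induction on the number of pops: at every iteration, for every reachable node $\tilde n$ that has not yet been popped, some ancestor $n'$ of $\tilde n$ on the root-to-$\tilde n$ path lies in the queue. At initialization only the root is in the queue, and the root is an ancestor of every reachable node, so the base case holds. For the inductive step, suppose the invariant holds just before the algorithm pops some node $n$ and inserts all of its children into the queue. For any reachable $\tilde n$ not yet popped, if its witness ancestor was different from $n$, the same witness remains in the queue; if the witness was $n$ itself and $\tilde n \neq n$, then the child of $n$ lying on the root-to-$\tilde n$ path has just been inserted and serves as a fresh witness. (In the setting where this lemma is applied, each search node is tagged with its full path from the root, so the search graph is a tree and ``the ancestor path'' is unambiguous.)

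Combining the invariant with the monotonicity hypothesis $c(n) \leq c(n')$ for any child $n'$ of $n$ yields the conclusion. Suppose the algorithm terminates at $n^*$, and let $\tilde n$ be any other terminal node. Since the search terminated upon popping $n^*$, the node $\tilde n$ has not yet been popped, so by the invariant some ancestor $n'$ of $\tilde n$ is in the queue at the moment $n^*$ is selected. Iterating the monotonicity inequality along the path from $n'$ down to $\tilde n$ gives $c(n') \leq c(\tilde n)$. But $n^*$ was chosen as the minimum-cost element of the queue, hence $c(n^*) \leq c(n') \leq c(\tilde n)$, as required.

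The main obstacle is phrasing and verifying the invariant cleanly, since it is really what encodes the correctness of the search strategy; once in hand, the monotonicity hypothesis is used only in a single chained inequality. Because the search graph here is a proper tree (nodes encode their full history), no reconciliation of duplicate paths is required, and the induction reduces to bookkeeping about which ancestor serves as the witness after each expansion. In a more general graph-search setting the argument still goes through, but the invariant must then be stated in terms of paths from the root rather than nodes, which would be the principal additional care needed.
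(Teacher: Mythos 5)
Your proof is correct: the invariant (every not-yet-popped reachable node has an ancestor on its root path in the queue) plus the chained monotonicity inequality is exactly the standard Dijkstra-style correctness argument for uniform-cost search. Note, however, that the paper does not prove this lemma at all --- it is imported verbatim from \cite{aibook} --- so there is no in-paper proof to compare against; your argument is the canonical one the citation points to, and the only (minor) point left implicit is that termination itself is guaranteed here because the search tree is finite.
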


\begin{proof}[Proof of Proposition \ref{th:search}]
The assumption in Proposition \ref{th:search} corresponds to the assumption in Lemma \ref{lemma:search}. Hence, the uniform-cost search terminates at an action sequence $a_{0:T}$ with the lowest execution complexity, and since $a_{0:T}$ is optimal by construction, the result follows.
\end{proof}

\subsection{Details of SCAP}
\subsubsection{Uniform-cost search for finding $\mathcal{A}_k$}
One can use a uniform-cost search for trying to find $\mathcal{A}_k$ for moderate stage length $l$. More precisely, the uniform-cost search is conducted with nodes $n=(t,a_{0:t-1})$, cost $c(n) = K(a_{0:t-1})$, and termination criteria $c(n) > L_k+\delta_k$ for some $\delta_k \geq 0$. If $K(a_{0:t-1}) \leq K(a_{0:t})$ holds in the search, then one can set $\delta_k=0$ and the search terminates when all sequences in $\mathcal{A}_k$ have been considered (cf. Proposition \ref{th:search}). However, since this assumption may not always hold throughout the search, one consider instead a positive $\delta_k>0$ as a margin (increasing the search, though without any formal guarantees of having found all elements of $\mathcal{A}_k$ at termination). 
\subsubsection{Action sequence extraction} The action sequence extraction for the hard-constrained version \eqref{eq:hard_con_version} is given by
\begin{equation*}
\bold{a}_k \in \argmax_{a_{0:l-1} \in \mathcal{A}_k} \left [ r^l_k(s_{lk},a_{0:l-1})+V_{k+1}(f^l_k(s_{lk},a_{0:l-1})) \right ]
\end{equation*}
for $k=0,1,\dots,K$, starting at $k=0$, with next state given by $s_{l(k+1)} = f^l_k(s_{lk},\bold{a}_k)$. In the end, at $k=K$, the total action sequence $a_{0:T} = (\bold{a}_0,\dots,\bold{a}_K)$ has been obtained. The procedure for \eqref{eq_modified_objective} is analogous.

\subsection{The Estimation Method}
We provide a brief overview of the method from \cite{zenil2018decomposition} used to estimate the Kolmogorov complexity $K(x)$, followed by a discussion concerning potential bias this method might generate in the numerical evaluations in Section \ref{numerical}.
\subsubsection{Overview} The method \cite{zenil2018decomposition} estimates $K(x)$ by first partitioning $x \in A^*$ into substrings, all of fixed block length $l>0$ (plus one possible remainder substring with length less than $l$). Denoting such a substring $x^i$ and letting $n_i$ be the number of occurrences of $x^i$ in the partition of $x$, the complexity $K(x)$ is then estimated by
\begin{equation}\label{eq:estimation_decomposition}
\tilde{K}(x) = \sum_i [k(x^i)+\log(n_i)],
\end{equation}
where $k(x^i)$ approximates $K(x^i)$ based on the coding theorem in algorithmic probability \cite{li2008introduction}, referring to \cite{zenil2018decomposition} for details. In words, $\tilde{K}(x)$ is the total complexity for generating all the substrings $x^i$, plus the code-length needed for specifying how frequent each $x^i$ is in $x$, which can be encoded in a codeword of length $O(\log(n_i))$. The method can thus be seen as a hybrid method leveraging both classical and algorithmic information theory. See \cite{zenil2018decomposition} for details and variants.

\subsubsection{Potential bias}\label{appendix_potential_bias}
We picked the default length $l=12$ as block length when using the estimation method in the numerical evaluations in Section \ref{numerical}. The decomposition in \eqref{eq:estimation_decomposition} makes the method focus on local computational regularities within the blocks $x^i$ and recurrence over the blocks. Consequently, the blocks $x^i$ found are intuitively simple, either alternating between two actions periodically (e.g., trajectory 1 in Fig. \ref{fig:test1_fig_rec_side20}) or batching the same actions together and execute such batches sequentially (e.g., trajectory 1 in Fig. \ref{fig:test1_fig_rec_side10}), while recurrences over the blocks is seen in e.g., trajectory 17 in Fig. \ref{fig:test1_fig_rec_side20}. With a method detecting global computational regularities, we might have seen sequences such as 1 and 3 in Fig. \ref{fig:test1_fig_rec_side10} (going to another corner first before reaching $s^\star$) for e.g., the medium room too. However, this is not the case since such a sequence for the medium size room has estimated complexity 92, significantly higher than the ones in Table \ref{table_2}. 

The remainder substring (with block length less than $l$) in the decomposition method may also generate somewhat different behaviour towards the end of the action sequence (e.g., 13-30 in Fig. \ref{fig:test1_fig_rec_side10}) because it is treated separately from the other $l$-length blocks due to its shorter length. This is most notable for the small room case, where the remainder part has a relatively significant part of the whole sequence length, causing some variation towards the end of the trajectories in Fig. \ref{fig:test1_fig_rec_side10}. For the medium size room and the large room, the variation has relatively a more minor effect. This remainder effect can of course be removed by adapting the horizon length to be divisible by $l$.

\fi

\end{document}